\newcommand{\setitemtypes}{\ensuremath{T}}
\newcommand{\extensionproblem}{\textsc{Envy-Free Allocation Extension}\xspace}
\newcommand{\REFAE}{\textup{\textsc{Refae}}\xspace}
\newcommand{\FEFAE}{\textup{\textsc{Fefae}}\xspace}
\newcommand{\indset}{\textsc{Independent Set}\xspace}
\newcommand{\multicolorclique}{\textsc{Multicolored Clique}\xspace}
\renewcommand{\P}{\textsf{P}\xspace}
\newcommand{\NP}{\textsf{NP}\xspace}
\newcommand{\NPh}{\NP-hard\xspace}
\newcommand{\NPc}{\NP-complete\xspace}
\newcommand{\FPT}{\textsf{FPT}\xspace}
\newcommand{\XP}{\textsf{XP}\xspace}
\newcommand{\W}[1][1]{\textsf{W[#1]}\xspace}
\newcommand{\Wh}[1][1]{\W[#1]-hard\xspace}
\newcommand{\Whness}[1][1]{\W[#1]-hardness\xspace}
\newcommand{\bigoh}{\mathcal{O}}
\newcommand{\poly}{\ensuremath{\mathtt{poly}}\xspace}
\newcommand{\agenttypes}{\ensuremath{n_t}\xspace}
\newcommand{\itemtypes}{\ensuremath{m_t}\xspace}
\newtheorem{proposition}{Proposition}
\newtheorem{observation}{Observation}
\newtheorem{definition}{Definition}
\newtheorem{lemma}{Lemma}
\newif\iflong
\newif\ifshort
\title{The Complexity of Extending Fair Allocations of Indivisible Goods}
\author{Argyrios Deligkas\textsuperscript{1}, Eduard Eiben\textsuperscript{1}, Robert Ganian\textsuperscript{2},\\ Tiger-Lily Goldsmith\textsuperscript{1} and Stavros D. Ioannidis\textsuperscript{1} \\[1em]
    \textsuperscript{1}Department of Computer Science, Royal Holloway, University of London \\
    \textsuperscript{2}Algorithms and Complexity group, TU Wien, Austria 
    }
\date{}
\begin{document}

\maketitle

\begin{abstract}
We initiate the study of computing envy-free allocations of indivisible items in the extension setting, i.e., when some part of the allocation is fixed and the task is to allocate the remaining items. Given the known \NP-hardness of the problem, we investigate whether---and under which conditions---one can obtain fixed-parameter algorithms for computing a solution in settings where most of the allocation is already fixed. Our results provide a broad complexity-theoretic classification of the problem which includes: (a) fixed-parameter algorithms tailored to settings with few distinct types of agents or items; (b) lower bounds which exclude the generalization of these positive results to more general settings. We conclude by showing that---unlike when computing allocations from scratch---the non-algorithmic question of whether more relaxed EFX allocations exist can be completely resolved in the extension setting. 
\end{abstract}

\section{Introduction}
Finding a ``fair'' allocation of indivisible items or resources to a provided set of agents, each with their own preferences \iflong over the resources\fi, is one of the central tasks arising in the area of computational social choice. The arguably most classical and established notion of fairness used in these settings is \emph{envy-freeness}, where we ask for an allocation $\pi:M\rightarrow N$ from the set $M$ of items to the set $N$ of agents such that for each pair of agents $i,j\in N$, $i$ prefers $\pi^{-1}(i)$ to $\pi^{-1}(j)$. In other words, no agent envies another agent. It is well known that an envy-free allocation need not exist, and in fact, determining whether one exists is \NP-complete~\cite{BouveretL08}.

The aforementioned intractability has led to a flurry of research to circumvent this issue. One approach that has been proposed is to consider ``less restrictive'' versions of envy-freeness instead, with the aim of not only ensuring that the computation of an assignment is tractable but---even more desirably---that one always exists. Notably, it is known that one can always compute, in polynomial time, an allocation which is \emph{envy-free up to one item} (\emph{EF1})~\cite{amanatidis2020multiple,budish2011combinatorial,0001KSY24}. However, such allocations may sometimes be considered very far from ``fair''. A recently proposed intermediate notion between envy-free and EF1 allocations is \emph{envy-free up to any item} (\emph{EFX})~\cite{CaragiannisKMPS19,FeldmanMP24}, but it is not known whether EFX allocations always exist and their polynomial-time computability remains open as well.

Another notable approach to tackling  the problem of computing envy-free allocations is to identify precise conditions under which it can be solved efficiently. This is typically done by investigating the problem through the lens of \emph{parameterized complexity}~\cite{DowneyF13,CyganFKLMPPS15}---a refinement of the classical complexity paradigm where inputs are analyzed not only with respect to their size $n$, but also to a numerical parameter $k$ which measures some well-defined quantity. While we cannot hope to obtain an $n^{\bigoh(1)}$ algorithm for computing envy-free allocations in general, in this setting one aims to design so-called \emph{fixed-parameter tractable} (or \FPT) algorithms for the problem, that is, algorithms running in time $f(k)\cdot n^{\bigoh(1)}$ for some computable function $f$. The simplest parameters considered in past works on fair division include, for example, the number of agents or items; however, such parameterizations place strong restrictions on the input instances, and so more recent works have focused on developing algorithms parameterized by the number of \emph{agent types} (where two agents have the same type if they have identical preferences over the items), or the number of \emph{item types} (where two items have the same type if they are valued the same by every agent)~\cite{DeligkasEGHO21,EibenGHO23,0004R23}, see also~\cite{BranzeiLM16,GanianOR19}.

While the aforementioned approaches have by now provided a fairly detailed understanding of computing an envy-free allocation from scratch, in this article we turn to the problem of extending a partial allocation---that is, computing an envy-free allocation when part (or even most) of the allocation is already fixed. This \emph{envy-free allocation extension} problem arises naturally whenever one needs to deal with resources that have already been assigned or must be assigned to certain agents---consider, e.g., the case where a few new items are made available after an allocation has been fixed and we are not allowed to take items from agents, or the setting where most employees in a company already have fixed tasks, but we need to distribute a set of new tasks to recently hired employees. It is worth noting that while the complexity of extending partial solutions has been extensively researched in settings as diverse as data completion~\cite{GanianKOS18,KoanaFN21,EibenGKOS23,KoanaFN23} and graph drawing~\cite{AngeliniBFJKPR15,ArroyoDP19,EibenGHKN20,BhoreGKMN23}, 
this has not yet been systematically studied in classical resource allocation in spite of the many situations in which one may need to deal with some items being pre-assigned.\footnote
{In parallel to this article, a separate work investigating other aspects of extending fair allocations has very recently been published~\cite{igarashi2024fair}. In this work, the authors study the complexity, from the \P vs \NP point of view, of extending fair allocation with respect to relaxed notions of envy-freeness such as EF1, MMS and PROP1.}

\subsection{Our Contribution}
We study the envy-free allocation extension problem in the classical setting of additive utilities, as outlined below.

      \noindent
      \begin{center}
      \begin{tcolorbox}[title=~\textsc{Envy-Free Allocation Extension},left=-1.5mm,top=0mm,bottom=0mm,right=0mm,boxsep=1mm]
        \begin{tabular}{p{.15\textwidth}p{.8\textwidth}}
            \textbf{Input:} & \parbox[t]{.74\textwidth}{A set $M$ of indivisible items, a set $N$ of $n$ agents with additive valuations, and a partial allocation $\gamma:M'\rightarrow N$ of some subset $M'\subseteq M$ of items.}\\[20pt]
            \textbf{Question:} & \parbox[t]{.74\textwidth}{Does there exist an extension allocation $\pi:(M\setminus M')\rightarrow N$ of the remaining (``open'') items such that $\gamma \cup \pi$ is envy-free?}\\
        \end{tabular}
      \end{tcolorbox}
      \end{center}     

Notice that \textsc{Envy-Free Allocation Extension} coincides with the usual problem of obtaining an envy-free allocation from scratch in the special case where $\gamma=\emptyset$, and hence is necessarily \NP-hard. However, this intractability does not properly reflect typical usage scenarios of the problem: in many cases of interest, one may be dealing with allocations $\gamma$ which are ``almost complete''. Hence, we turn to the aforementioned parameterized paradigm and ask which parameterizations of the partial assignment $\gamma$ allow us to achieve fixed-parameter traceability, i.e., obtain a fixed-parameter algorithm. 

We begin our investigation by considering the complexity of the problem with the number $k=|M\setminus M'|$ of yet-to-be-allocated or open items as the parameter. 
It is worth noting that \textsc{Envy-Free Allocation Extension} can be solved via a trivial $n^{\bigoh(k)}$ algorithm (a so-called \XP \emph{algorithm}), and that computing an envy-free allocation from scratch is trivially fixed-parameter tractable w.r.t.\ $k$---hence, one could have expected \textsc{Envy-Free Allocation Extension} to be fixed-parameter tractable w.r.t.\ $k$ as well. However it turns out that this is not the case: we will later show (in Theorem~\ref{thm:kmtwhard}) that the problem is \W[1]-hard when parameterized by $k$, which excludes fixed-parameter tractability under well-established complexity assumptions.

Unsatisfied by this outcome, we ask whether one can at least obtain a fixed-parameter algorithm under some mild restrictions on the input instance. For example instances with a bounded number of item or agent types arise naturally in large-scale models: one often aggregates similar items into categories (depending on the setting, these may be, e.g., ``cars'' and ``houses'', or ``PhD students'' and ``Postdocs''), and the inherent limitations of how preferences are collected may result in many agents being represented as having the same preferences. Inspired by related works on resource allocation~\cite{DeligkasEGHO21,EibenGHO23,0004R23}, we consider additionally parameterizing by the number $n_t$ of the agent types or the number $m_t$ of item types. As our first positive result, we obtain an algorithm when we bound the number of agent types.

\begin{restatable}{theorem}{kntfpt}\label{thm:kntfpt}
  \extensionproblem is fixed-parameter tractable when parameterized by the number $k$ of open items plus the number \agenttypes of agent types.
\end{restatable}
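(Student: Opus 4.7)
The plan is to design a branching algorithm that, for each of the $k$ open items in $O := M\setminus M'$, first guesses (i) which of the $\agenttypes$ agent types $f(o)\in T$ should receive $o$, and (ii) a slot index $s(o)\in\{1,\ldots,k\}$ identifying the specific recipient within that type; this yields a total of at most $(\agenttypes\cdot k)^k$ guesses. Setting $\beta_t(j) := u_t(\gamma^{-1}(j))$ and $\sigma_t := \sum_{o:\,f(o)=t} u_t(o)$, I would next observe that within-type envy-freeness forces every agent of each type $t$ to end up with the same final $u_t$-utility $V_t$, and summing $V_t-\beta_t(i)$ over $i\in A_t$ must equal $\sigma_t$. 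This pins down $V_t = (\sigma_t + \sum_{i\in A_t}\beta_t(i))/|A_t|$ in closed form; one then verifies in polynomial time that $V_t$ is a valid utility value with $V_t\ge\beta_t(i)$ for all $i\in A_t$, and (from between-type envy-freeness applied to non-receivers) that $\beta_t(j)\le V_t$ for every agent $j$ and every type $t$.

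Given the guess of $(f,s)$, the induced slots are the bundles $B_r := \{o\in O : s(o)=r\}$, each of which must consist of items of a single common $f$-type $t_r$ (else the guess is discarded). Letting $D_t := \{i\in A_t : \beta_t(i)<V_t\}$ denote the \emph{deficit} agents of type $t$, each non-empty slot $B_r$ falls into one of two categories: a \emph{deficit slot} if $u_{t_r}(B_r)>0$ (and must be assigned to some $i\in D_{t_r}$ with $u_{t_r}(B_r) = V_{t_r}-\beta_{t_r}(i)$), or an \emph{at-level slot} if $u_{t_r}(B_r)=0$ (and must be assigned to a distinct agent in $A_{t_r}\setminus D_{t_r}$). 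In either case, the between-type envy condition imposes $u_{t'}(B_r)\le V_{t'}-\beta_{t'}(j)$ on the chosen recipient $j$, for every $t'\ne t_r$.

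The task then reduces to two bipartite matching problems: the deficit matching, whose both sides have size at most $k$ (at most $k$ deficit slots and at most $k$ deficit agents), is solvable in $\poly(k)$ time; the at-level matching, in which only the slot side is bounded by $k$ while the agent side may be large, is solvable in polynomial time via standard max-flow. The algorithm outputs YES iff some enumeration of $(f,s)$ admits both matchings; the overall running time is $(\agenttypes\cdot k)^k\cdot\poly(|N|+|M|)$, which is $\FPT$ in $k+\agenttypes$. The main obstacle I anticipate is arguing that this structural decomposition is complete---i.e., that every envy-free extension is captured by some $(f,s)$ together with the two matchings, despite the interactions between deficit bundles, at-level bundles, and items that are null under $u_{t_r}$ but may take positive value under some other $u_{t'}$; this relies on the fact that at-level agents have $\beta_{t_r}=V_{t_r}$ in their own type and therefore contribute no within-type constraint, cleanly decoupling the two matching subproblems.
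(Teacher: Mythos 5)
Your argument is correct and achieves essentially the same $(n_t\cdot k)^{\bigoh(k)}\cdot\poly$ bound, but the decomposition is genuinely different from the paper's. Both proofs pivot on the same structural fact---in any envy-free completion all agents of one type end up with equal utility, so at most $k$ agents can receive a bundle they value positively---and both finish with bipartite matching. The paper, however, proceeds by (i) first branching to hand an open item to every agent that already envies someone under $\gamma$, (ii) splitting agent types into \emph{small} (at most $k$ agents) and \emph{large} ones and observing that every large-type agent must receive a bundle it values $0$, (iii) branching over the partition of the open items into bundles together with the identity of the small-type recipient (if any) of each bundle, and (iv) matching the leftover bundles to large-type agents. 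You instead guess only a (type, slot) label per item and then \emph{derive} the forced per-type utility level $V_t$ in closed form from the counting identity $|A_t|V_t=\sigma_t+\sum_{i\in A_t}\beta_t(i)$; this single formula replaces both the paper's small/large dichotomy and its preliminary ``feed the envious agents'' phase, since an agent that envies someone under $\gamma$ automatically satisfies $\beta_t(i)<V_t$ and lands in your deficit set $D_t$, where the exact-value matching forces it to be served. The explicit $V_t$ is a piece of structure the paper never states (it implicitly re-derives the utility levels by guessing the concrete small-type recipient of each bundle), and your split into an exact-value deficit matching on $\bigoh(k)$ vertices plus an at-level matching solved by standard matching/max-flow is a clean way to organize the final step. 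The completeness worry you flag is resolved exactly as you suggest: because $s$ ranges over all of $[k]^k$, any solution in which one agent receives both a positively valued and a zero-valued collection of items is captured by the guess that merges them into a single deficit slot, so restricting attention to two disjoint matchings over distinct recipients loses no generality.
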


Turning to the analogous question for item types, we show that the situation is entirely different: allocating $k$ open items is intractable even when there are only a few item types. 

\begin{restatable}{theorem}{kmtwhard}\label{thm:kmtwhard}
  \extensionproblem is \textup{\W[1]-hard} when parameterized by the number $k$ of open items plus the number \itemtypes of item types.
\end{restatable}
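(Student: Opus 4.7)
The plan is to establish \Whness[1] by a parameterized reduction from \multicolorclique, which is known to be \Wh[1] when parameterized by the number of colors $k$. Given an instance $(G,k)$ with $G = (V_1 \sqcup \cdots \sqcup V_k, E)$, I would construct an equivalent \extensionproblem instance that uses exactly $k$ open items and $\bigoh(k^2)$ item types, so both parts of the combined parameter are bounded by a function of the original parameter $k$.

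The main ingredients would be the following: one \emph{color-selector} open item $c_i$ of its own type $T_i$ for every color $i \in [k]$; for each unordered pair $\{i,j\}$ an additional \emph{pair} type $T_{ij}$ used only inside the partial allocation; one vertex-agent $a_v$ per vertex $v \in V$; and a small number of dummy agents. Each $a_v$ with $v \in V_i$ would value the type $T_i$ very highly, whereas the dummies would value exclusively an auxiliary irrelevant type. A straightforward envy argument then forces that, in any envy-free extension, each $c_i$ must be assigned to some vertex-agent $a_{v_i}$ with $v_i \in V_i$; otherwise the non-receiving agents of $V_i$ would envy the recipient.

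To force the selected tuple $(v_1,\dots,v_k)$ to form a multicolored clique, the partial allocation would endow each $a_v$ with $v \in V_i$ with a specific multiplicity of every pair-type item $T_{ij}$ (for $j \neq i$) that encodes the neighborhood $N(v) \cap V_j$. The valuations of the pair types would be tuned so that the pairwise envy inequality between two selected agents $a_{v_i}$ (holding $c_i$) and $a_{v_j}$ (holding $c_j$) collapses into a single linear test that is satisfied exactly when $v_iv_j \in E$. Remaining envy constraints (between selected and unselected agents, or between pairs of unselected agents) would be absorbed by a \emph{self-lift}: every $a_v$ is also given sufficiently many items of its own ``home'' types so that $a_v$ strictly prefers its own partial bundle over that of any other vertex-agent by a margin exceeding the value of any single color-selector. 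With this calibration, both directions of the reduction would become almost immediate: a multicolored clique yields an envy-free extension, and an envy-free extension yields a multicolored clique through the two properties above.

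The hard part will be the joint design of pair-type multiplicities, pair-type valuations, and the self-lift so that a single linear envy inequality between $a_{v_i}$ and $a_{v_j}$ faithfully encodes adjacency while every other envy constraint is satisfied automatically. I plan to implement the neighborhood encoding through a large-radix positional representation of each $N(v) \cap V_j$, paired with companion valuation vectors that isolate the specific ``digit'' testing whether $v_j \in N(v_i)$ in the envy comparison; the self-lift then needs to be chosen large enough to dominate all spurious envy terms yet small enough not to interfere with the clique test. Once the encoding and its margin analysis are verified, the parameter bound $k + \itemtypes = \bigoh(k^2)$ completes the parameterized reduction and yields the claimed \Whness[1] of \extensionproblem with respect to $k + \itemtypes$.
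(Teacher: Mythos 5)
Your overall frame---a parameterized reduction from \multicolorclique with one vertex-agent per vertex and a forcing argument that sends each color-selector item into its own color class---matches the paper's starting point. The genuine gap is in how you verify adjacency. You want the single envy inequality between the two selected agents $a_{v_i}$ (holding $c_i$) and $a_{v_j}$ (holding $c_j$) to ``collapse into a linear test'' for $v_iv_j\in E$. But after the extension this inequality reads $v_{a_{v_i}}(\gamma_{a_{v_i}})+v_{a_{v_i}}(c_i)\geq v_{a_{v_i}}(\gamma_{a_{v_j}})+v_{a_{v_i}}(c_j)$, where the only term coupling $v_i$ and $v_j$ is $v_{a_{v_i}}(\gamma_{a_{v_j}})=\sum_t m_t(v_j)\cdot u_{v_i}(t)$, an inner product of two non-negative vectors whose dimension is at most \itemtypes, i.e., bounded by a function of $k$ only. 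Thresholding such an inner product (even OR-ing the two ordered versions of the test) can only realize neighborhood set systems of bounded VC dimension/sign rank, while \multicolorclique is \Wh[1] on graphs with arbitrary bipartite adjacency between color classes, whose adjacency matrices require sign rank growing with $|V|$. Your proposed fix---a large-radix positional encoding of $N(v)\cap V_j$ with a companion vector that ``isolates the digit'' for $v_j$---does not escape this: a single linear threshold cannot extract a middle digit of a number, because contributions from higher-order digits (other neighbors of $v_i$ in $V_j$) dominate the comparison. There is also an internal tension in your calibration: the self-lift is supposed to make every $a_v$ prefer its own partial bundle to that of \emph{any} other vertex-agent by more than the value of a color-selector, but that same margin destroys both the forcing argument (nobody in $V_i$ would envy a wrong recipient of $c_i$) and the adjacency test (which needs the relevant envy inequalities to sit exactly on the boundary).

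The paper circumvents this obstruction by not checking adjacency between vertex-agents at all. It introduces an \emph{edge-agent} for every edge and one additional open item $\tau_{ij}$ per color pair (so $q+\binom{q}{2}$ open items rather than $q$), and calibrates bundles so that each vertex-agent $\alpha_x$ values the partial bundle of an edge-agent $\eta_z$ \emph{exactly} equal to its own bundle precisely when $z$ is incident to $x$ (and strictly less otherwise), using quadratically tuned multiplicities of the form $x$ copies of $\Box^i$ and $2|V_i|^2-x^2-x$ copies of $\triangle^i$ together with ``$\bigstar$'' items that pin each open item to its own group. Giving $\tau_{ij}$ to $\eta_{(x,y)}$ then creates envy from exactly $\alpha_x$ and $\alpha_y$, which can only be repaired by handing $s_i$ to $\alpha_x$ and $s_j$ to $\alpha_y$; since there is one $s_i$ per color, the selected vertices are forced to be pairwise adjacent. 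The adjacency information is thus carried by \emph{which agent} receives an open item (a choice with $|E_{ij}|$ options) rather than by the magnitude of a fixed linear form, which is what makes the reduction go through with only $f(k)$ item types. I would recommend restructuring your construction along these lines.
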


The proof of \Cref{thm:kntfpt} relies on a combination of branching techniques with insights into the structure of hypothetical solutions; while it is not trivial, in this case it is the lower bound---\Cref{thm:kmtwhard}---which presents a much more surprising result. 
Indeed, at first glance the problem seems far from intractable: we only need to assign $k$ items to a set of agents, and for each open item $i$, we can easily test whether assigning it to an agent $j$ would make any of the other agents in $N$ envy $j$ w.r.t.\ their current items. 
The only caveat is that we do not know which of the other agents will receive the remaining open items---and in spite of there being only at most $k$ open items in total, we show that the uncertainty of how these will be allocated can be used to obtain a highly non-trivial reduction from the classical \W[1]-hard \textsc{Multicolored Clique} problem~\cite{DowneyF13}.

In the second part of our article, we take aim at the setting where many items may be open, but we are only allowed to allocate these items to at most $p$-many agents. This is precisely the situation that arises when, e.g., allocating tasks to a few new employees, or needing to distribute new items to individuals while adhering to constraints on shipping costs, or how many of the individuals should be contacted.
Formally, this is captured by asking for an allocation $\pi$ which assigns all open items to a set $S$ of at most $p$-many agents. We will distinguish between whether the set $S$ of potential recipients is provided in the input (in which case we speak of \textsc{Restricted Envy-Free Allocation Extension}, or simply \REFAE) or can be selected along with $\pi$ (giving rise to \textsc{Free Envy-Free Allocation Extension}, or simply \FEFAE). 
Distinguishing these two variants of our problem is not only necessary to differentiate between distinct usage scenarios, but will also turn out to have an impact on the problem's complexity.

We note that while both variants are in \NP, attempting to solve either of them when parameterized by $p$ (the number of agents that receive items) alone is entirely hopeless---and this holds even in the strong setting, i.e., with unary-encoded valuations.

\begin{restatable}{theorem}{kappahardtwo}\label{thm:kappahardtwo}
  Both \REFAE and \FEFAE are strongly \textup{\NP-hard} even when $p\leq 2$. 
\end{restatable}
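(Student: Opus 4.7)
The plan is to give a polynomial-time reduction, with all numerical values bounded polynomially in the input size, from the strongly \NP-hard problem \textsc{Independent Set}. Given an instance $(G=(V,E),k)$ of \textsc{Independent Set}, I would construct an \REFAE instance whose recipient set $S=\{a_1,a_2\}$ has size exactly two (so $p=2$), together with one open item $x_v$ per vertex $v\in V$. The intended correspondence is that an independent set $I$ of size $k$ in $G$ yields the envy-free extension in which $O_1=\{x_v:v\in I\}$ is given to $a_1$ and $O_2=\{x_v:v\notin I\}$ is given to $a_2$.

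The main conceptual obstacle is a symmetry problem inherent in envy-freeness: any ``watcher'' agent $a$ that I introduce with a tight bundle $B_a$ simultaneously imposes $u_a(O_1\cup B_{a_1})\le u_a(B_a)$ \emph{and} $u_a(O_2\cup B_{a_2})\le u_a(B_a)$, so every constraint I want to put on $a_1$'s side is mirrored by an unintended one on $a_2$'s side. I would break this symmetry by exploiting the partial allocation $\gamma$: for each watcher $a$ I would preallocate to $B_{a_1}$ a cheap ``marker'' item whose only $a$-value is carefully tuned so that $a$'s envy constraint toward $a_1$ collapses to the desired tight upper bound on $u_a(O_1)$, while the absence of any marker in $B_{a_2}$ makes $a$'s envy toward $a_2$ vacuously satisfied. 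With this gadget in hand, the rest of the construction is natural: (i) an edge-watcher $a_e$ for each $e=\{u,v\}\in E$ with $v_{a_e}(x_u)=v_{a_e}(x_v)=1$ enforces $|\{x_u,x_v\}\cap O_1|\le 1$, i.e.\ $O_1$ encodes an independent set; (ii) an upper-bound watcher analogously forces $|O_1|\le k$; and (iii) a lower-bound watcher whose bundle contains $k$ items valuable only to $a_1$ forces $|O_1|\ge k$ via $a_1$'s non-envy of that watcher. Choosing $v_{a_1},v_{a_2}$ to score only the $x$-items, with $v_{a_1}$ appropriately offset, makes the mutual envy constraint between $a_1$ and $a_2$ trivially satisfied whenever $|O_1|=k$.

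Because every valuation introduced is $O(|V|+|E|)$, strong \NP-hardness is preserved. For the \FEFAE variant, the construction already forces $S=\{a_1,a_2\}$: sending any open item to a watcher would immediately make $a_1$ envious (since $a_1$ values every $x_v$ at $1$ but none of the watchers' markers), while sending all open items to a single one of $a_1,a_2$ would violate either the size gadget or the non-envy of $a_1$. The most delicate step I anticipate is verifying the reverse direction of the correspondence---showing that an \emph{arbitrary} envy-free extension really must correspond to an independent set of size exactly $k$ and cannot exploit unforeseen interactions among the asymmetric marker bundles of different watchers or between these markers and the size gadgets; this will require a careful case analysis of how each watcher's envy constraint decomposes once $a_1$'s and $a_2$'s initial bundles and the open items are substituted in.
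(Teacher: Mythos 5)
Your proposal is correct and follows essentially the same route as the paper: a reduction from \textsc{Independent Set} with two designated recipients, one open item per vertex, per-edge ``watcher'' agents whose asymmetric valuations of the two recipients' pre-allocated bundles force one recipient's open items to be an independent set, and cardinality gadgets pinning that recipient's share to exactly $k$. The only difference is cosmetic: the paper folds your upper- and lower-bound watchers into the valuations of the two recipients themselves (recipient $2$ initially envies every edge agent and so must absorb at least $|V|-\ell$ open items, which in turn forces recipient $1$ to take the remaining $\ell$), which also immediately handles the \FEFAE case.
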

This once again raises the question of whether one can use a bound on $n_t$ or $m_t$ to obtain efficient algorithms. 
Given \Cref{thm:kntfpt} and \Cref{thm:kmtwhard}, one might expect that agent types would once again allow us to obtain fixed-parameter algorithms in this case. 
And yet, we show that when dealing with $p$ the situation is reversed: unexpectedly, here it is the number $m_t$ of item types that provides the key to tractability. Indeed, one can establish strong \W[1]-hardness as well as weak para\NP-hardness of \REFAE and \FEFAE when parameterized by the total number of agents in the instance (which upper-bounds both $n_t$ and $p$) directly from existing results in the literature~\cite{BliemBN16}. Before proceeding, we at least show that the former lower bound is tight: the problems admit \XP-algorithms in the case where the utilities are not encoded in binary.

\begin{restatable}{theorem}{kappantxp}\label{thm:kappantxp}
  If the valuations are encoded in unary, both \REFAE and \FEFAE are in \textup{\XP} when parameterized by the number $p$ of recipients plus the number $n_t$ of agent types.
\end{restatable}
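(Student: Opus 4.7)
The plan is to give an \XP algorithm for \REFAE parameterized by $p + \agenttypes$, and then handle \FEFAE by branching over all $\bigoh(n^p)$ candidate recipient sets $S \subseteq N$ of size at most $p$ and reducing each to \REFAE.

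For \REFAE with a fixed recipient set $S$, the starting observation is that, under additive utilities, envy-freeness of an extension $\pi$ depends only on $\gamma$ and on the utility profile $\vec{u}(\pi^{-1}(j)) \in \N^{\agenttypes}$ of each recipient $j \in S$, where the $\tau$-th coordinate of $\vec{u}(B)$ is the value a representative of agent type $\tau$ assigns to $B$; the specific identities of items inside each bundle are irrelevant. Because valuations are encoded in unary, the maximum utility $V$ any agent can attribute to any subset of items is polynomially bounded in the input size, so the space of possible profiles has size at most $(V+1)^{\agenttypes} = n^{\bigoh(\agenttypes)}$.

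I would then run a dynamic program over the open items $o_1, \ldots, o_k$ in an arbitrary fixed order. A DP state is a pair $(i, \Phi)$ where $\Phi : S \to \{0, \ldots, V\}^{\agenttypes}$ gives the cumulative utility profile accrued by each recipient after items $o_1, \ldots, o_i$ have been placed; a transition from $(i, \Phi)$ branches over the $p$ possible recipients of $o_{i+1}$, updating the chosen recipient's profile by $\vec{u}(\{o_{i+1}\})$. The state space has size at most $(k+1) \cdot (V+1)^{p \cdot \agenttypes} = n^{\bigoh(p \cdot \agenttypes)}$, and each transition runs in polynomial time, yielding the claimed \XP bound. For every reachable final state $(k, \Phi)$, I would compute each agent's total utility (namely $u_\tau(\gamma^{-1}(i))$ for $i \notin S$ and $u_\tau(\gamma^{-1}(i)) + \Phi(i)_\tau$ for $i \in S$, with $\tau$ the type of $i$), check all $\bigoh(n^2)$ pairwise envy constraints in polynomial time, and accept iff some reachable state passes.

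The main obstacle, and the only point at which the unary encoding is used, is bounding the DP state space: under binary encoding $V$ could be exponential in the input length and both the profile grid and the DP would blow up accordingly. Correctness is then essentially routine, resting on the structural observation that two extensions producing identical profiles for every recipient are interchangeable for envy-freeness under additive valuations, so enumerating profiles via the DP is equivalent to enumerating feasible extensions.
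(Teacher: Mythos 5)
Your proposal is correct and follows essentially the same route as the paper: branch over the at most $n^p$ recipient sets for \FEFAE, then run a dynamic program over the open items whose states record, for each recipient and each agent type, the (unary-bounded) value of that recipient's accumulated bundle, giving $n^{\bigoh(p\cdot \agenttypes)}$ states and a polynomial-time envy check at the end. The justification you give for why profiles suffice and where the unary encoding is needed matches the paper's argument.
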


Turning to the setting with few item types, we provide a fixed-parameter algorithm for the restricted case (i.e., when the set $S$ of recipients is already given) and an \XP algorithm for the free case (when the task is to also find $S$ on its own). Here, the latter result is the best one could hope for at this point, as fixed-parameter tractability is immediately excluded by \Cref{thm:kmtwhard}. In contrast to \Cref{thm:kappantxp}, both of these algorithms can be applied regardless of whether the utilities are encoded in unary or binary.

\begin{restatable}{theorem}{kappamtfpt}\label{thm:kappamtfpt}
  When parameterized by $p$ plus the number $m_t$ of item types, \REFAE and \FEFAE are fixed-parameter tractable and \textup{\XP}-tractable, respectively.
\end{restatable}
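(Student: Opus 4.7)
The plan is to recast each variant as an instance of Integer Linear Programming (ILP) in which the number of variables is bounded by a function of the parameter, and then invoke Lenstra's classical result that ILP is fixed-parameter tractable in the number of variables.

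For \REFAE the set $S$ of recipients, with $|S|\leq p$, is provided as part of the input. For every pair $(t,j)$ where $t$ is an item type and $j\in S$, I introduce a non-negative integer variable $x_{t,j}$ denoting the number of open items of type $t$ assigned to agent $j$. The total number of variables is at most $p\cdot m_t$, which is bounded by the parameter. I then add the inventory constraints $\sum_{j\in S} x_{t,j} = c_t$, where $c_t$ is the number of open items of type $t$. Because items of the same type are valued identically by every agent, the utility that any agent $i\in N$ would derive from the bundle ultimately held by an agent $j\in S$ can be written as $\util_i(\gamma^{-1}(j)) + \sum_t \util_i(t)\cdot x_{t,j}$, i.e., a linear expression in the variables plus a $\gamma$-determined constant. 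The envy-freeness conditions for all ordered pairs of agents then become a set of $\bigoh(n^2)$ linear inequalities (using $x_{t,j}=0$ for $j\notin S$). Envy relations that only involve agents outside $S$ are fixed, so they can be checked once and the instance rejected if any of them is violated. Applying Lenstra's algorithm produces a solution in time $f(p\cdot m_t)\cdot \poly(|I|)$, giving fixed-parameter tractability.

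For \FEFAE the set $S$ is not part of the input, but any feasible $S$ satisfies $|S|\leq p$, so I simply branch over all at most $\binom{n}{\leq p}\leq n^{p}$ candidate subsets $S\subseteq N$ and run the \REFAE algorithm on each one. The total running time is $n^{p}\cdot f(p\cdot m_t)\cdot \poly(|I|)$, which is an \XP-bound with respect to the parameter $p+m_t$.

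The main obstacle is setting up the ILP so that the variable count depends only on the parameter while all envy-freeness constraints remain expressible linearly; the key observation making this possible is that the definition of item types forces per-type valuations to be agent-specific but item-independent, so utilities reduce to linear combinations of type-level counts. I should also sanity-check that the \XP bound for \FEFAE cannot be lifted to \FPT: were it so, one could set $p=k$ (since at most $k$ agents can ever receive open items), converting an \FPT algorithm for \FEFAE with parameter $p+m_t$ into one for \extensionproblem with parameter $k+m_t$, contradicting \Cref{thm:kmtwhard}. Thus the combination of \FPT for the restricted variant and \XP for the free variant is, in this sense, the best outcome one could hope for.
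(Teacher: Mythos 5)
Your proposal is correct and follows essentially the same route as the paper: branch over the at most $n^{p}$ candidate recipient sets for \FEFAE, then encode each (restricted) instance as an ILP with one variable per (recipient, item type) pair---so at most $p\cdot\itemtypes$ variables---express the envy-freeness conditions as linear inequalities (treating non-recipients via zero variables), and invoke Lenstra's algorithm. Your explicit handling of envy pairs entirely outside $S$ and your closing observation on why the \XP bound for \FEFAE is optimal given \Cref{thm:kmtwhard} match the paper's remarks as well.
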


An overview of our complete complexity-theoretic classification of \extensionproblem under the considered restrictions is provided in~\Cref{tab:results}. 

While not part of our main technical contributions, in the final section of the article, we also make a few observations about the behavior of relaxed variants of envy-freeness in the extension setting that may be of general interest to the community. In particular, we prove that the known result guaranteeing the existence of an EF1 allocation can be strengthened to also hold when extending \emph{any} provided partial allocation. Moreover, this result is tight: there exist partial EFX allocations which cannot be extended to a full EF1 allocation. Similarly, while the question of whether an EFX allocation ``from scratch'' is guaranteed to exist remains one of the main open questions in the field, we show that there exist partial envy-free allocations which cannot be extended to a full EFX allocation. 

\begin{table*}[t]
\centering
\begin{tabular}{cc||cc}
 & \multirow{2}{*}{Few items $k$} & \multicolumn{2}{c}{Few Recipients $p$}    \\ 
 &  & \REFAE & \FEFAE \\ \hline \hline
  \multicolumn{1}{l|}{\rule{0pt}{2ex}} & \Wh[1] (\Cref{thm:kmtwhard})  & \multicolumn{2}{c}{\NPh, $p \leq 2$ (\Cref{thm:kappahardtwo})}  \\ \hline 
 \multicolumn{1}{c|}{\rule{0pt}{2ex} Agent types \agenttypes}& \FPT (\Cref{thm:kntfpt})& 
 \multicolumn{2}{c}{\XP, unary valuations (\Cref{thm:kappantxp})}
 \\ 
 \hline
\multicolumn{1}{c|}{\rule{0pt}{2ex} Item types \itemtypes}& \hspace{0.5cm}\Wh[1] (\Cref{thm:kmtwhard}) \hspace{0.5cm}~& \multicolumn{1}{c|}{\hspace{0.5cm}\FPT (\Cref{thm:kappamtfpt})\hspace{0.5cm}~} & \hspace{0.5cm}\XP (\Cref{thm:kappamtfpt}) \hspace{0.5cm}~\\ \hline
\end{tabular}
\caption{An overview of our results. 
The \Whness[1] results of \Cref{thm:kmtwhard} are complemented by the naive, brute force, \XP algorithms.
The \XP algorithms for \REFAE and \FEFAE when parameterized by \agenttypes are essentially tight, since matching strong \Whness[1] and weak para\NP-hardness lower bounds follow from the literature~\cite{BliemBN16}.}
\label{tab:results}
\end{table*}

\section{Preliminaries}\label{sec:prelims}
For an integer $\ell$, we use $[\ell]$ as shorthand for the set $\{1,\dots,\ell\}$. We also use the standard $\bigoh^*$ notation to suppress polynomial factors of the input size in the running time.

Our instances include a 
 set of indivisible items $M=\{a_1, \ldots, a_m\}$ and a set of $n$ agents $N=[n]$. 
Every agent \iflong $i \in N$ \fi \ifshort $i$ \fi  has an {\em additive valuation} function $v_i$ that assigns a non-negative value $v_i(a)$ for every item $a \in M$ and for every subset, or {\em bundle}, of items $B \subseteq M$ we denote $v_i(B) := \sum_{j\in B} v_i(a_j)$. 
If $v_i(a)=v_j(a)$ for every $a \in M$, then we say that agents $i$ and $j$ are of the same type. 
We will use \agenttypes to denote the number of different {\em agent types} in $N$.
If for two items $a$ and $a'$ we have that $v_i(a) = v_i(a')$ for every agent $i$, then we say that $a$ and $a'$ are of the same type.
We will use \itemtypes to denote the number of different {\em item types} in $M$.

\paragraph{\bf Partial and Extended Allocations.} 
We will assume that $M$ is partitioned into a set of {\em given} items $M'$ and a set of {\em open} items $A$. Formally, $M = M' \cup A$ and $M' \cap A = \emptyset$. 
A {\em partial allocation} $\gamma = (\gamma_1, \gamma_2, \ldots, \gamma_n)$ is a partition of $M'$ into $n$ (potentially empty) sets, where $\gamma_i$ is the bundle of given items to agent $i$.
An allocation of open items $\pi = (\pi_1, \pi_2, \ldots, \pi_n)$ partitions $A$ into $n$ (again, potentially empty) sets, where agent $i$ gets bundle $\pi_i$. Given a partial allocation $\gamma$ and an allocation $\pi$, we get an {\em extended allocation} $\gamma \cup \pi$, where agent $i$ gets allocated bundle $\gamma_i \cup \pi_i$.

\paragraph{Fairness Concepts.}
We will consider three different fairness notions.
An extended allocation $\gamma \cup \pi$ is said to be:
\begin{itemize}
    \item {\em envy-free}, denoted EF, if for any pair of agents $i$ and $j$ it holds that $v_i(\gamma_i \cup \pi_i) \geq v_i(\gamma_j \cup \pi_j)$;
    \item {\em envy-free up to any item}, denoted EFX, if for any pair of agents $i$ and $j$ and any item $a \in \gamma_j \cup \pi_j$ it holds that $v_i(\gamma_i \cup \pi_i) \geq v_i(\gamma_j \cup \pi_j \setminus a)$;
    \item {\em envy-free up to one item}, denoted EF1, if for any pair of agents $i$ and $j$ there exists an item $a \in \gamma_j \cup \pi_j$ such that $v_i(\gamma_i \cup \pi_i) \geq v_i(\gamma_j \cup \pi_j \setminus a)$.
\end{itemize}
Observe that EF1 is a more relaxed fairness notion compared to EFX, which in turn is more relaxed than EF.

\paragraph{\bf Fair Allocation Extension Problems.}
We are interested in the computation of an allocation $\pi$ of open items, such that the extended allocation is fair according to some of the above-mentioned criteria; we define the problems for envy-free solutions, but they could naturally be extended for EF1 and EFX. 
The input to each version of our problem is a set of agents with their valuations and a partial allocation $\gamma$. 
In the first version, termed \textsc{Envy-Free Allocation Extension} and already introduced in the Introduction, we do not constrain the extension allocation in any way.
In the other two versions of the problem, the set of recipients is restricted in some way. 
One option is to restrict the set of agents that are allowed to receive an open item under $\pi$, which for brevity we denote \REFAE.
      \begin{center}
      \begin{tcolorbox}[title=~\textsc{Restricted Envy-Free Allocation Extension (REFAE)},left=-1.5mm,top=0mm,bottom=0mm,right=0mm,boxsep=1mm]
        \begin{tabular}{p{.15\textwidth}p{.8\textwidth}}
            \textbf{Input:} & \parbox[t]{.74\textwidth}{A set $M$ of indivisible items, a set $N$ of $n$ agents, a partial allocation \newline $\gamma:M'\rightarrow N$ of some subset $M'\subseteq M$ of items, and a set $S \subseteq N$.}\\[20pt]
            \textbf{Question:} & \parbox[t]{.74\textwidth}{Does there exist an allocation 
            \iflong $\pi:(M\setminus M')\rightarrow N$ \fi
            \ifshort $\pi$ \fi
            of the open items such that: (a) $\pi_i = \emptyset$ for every $i \notin S$; (b) $\gamma \cup \pi$ is envy-free?}\\
        \end{tabular}
      \end{tcolorbox}
      \end{center}     

\noindent
A different option is to restrict just the number of recipients. We term this problem \FEFAE.

\noindent
      \begin{center}
      \begin{tcolorbox}[title=~\textsc{Free Envy-Free Allocation Extension (FEFAE)},left=-1.5mm,top=0mm,bottom=0mm,right=0mm,boxsep=1mm]
        \begin{tabular}{p{.15\textwidth}p{.8\textwidth}}
            \textbf{Input:} & \parbox[t]{.74\textwidth}{A set $M$ of indivisible items, a set $N$ of $n$ agents, a partial allocation \newline $\gamma:M'\rightarrow N$ of some subset $M'\subseteq M$ of items, and $p \in \mathbb{N}$.}\\[20pt]
            \textbf{Question:} & \parbox[t]{.74\textwidth}{Does there exist an allocation 
            \iflong $\pi:(M\setminus M')\rightarrow N$ \fi
            \ifshort $\pi$ \fi
            of the open items such that: (a) $\pi_i \neq \emptyset$ for at most $p$ agents; (b) $\gamma \cup \pi$ is envy-free?}\\
        \end{tabular}
      \end{tcolorbox}
      \end{center}     

While we formally study the decision variants of these problems for complexity-theoretic reasons, every algorithm obtained in this article is constructive and can also output a suitable allocation if one exists.

\iflong
\paragraph{\bf Parameterized complexity.}
We refer to the standard books for a basic overview of parameterized complexity theory~\cite{CyganFKLMPPS15,DowneyF13}.
At a high level, parameterized complexity studies the complexity of a problem with respect to its input size, $n$,  and the size of a parameter $k$. A problem is {\em fixed parameter tractable} by $k$, if it can be solved in time $f(k)\cdot \poly(n)$, where $f$ is a computable function. 
A less favorable, but still positive, outcome is an $\XP{}$ \emph{algorithm}, which has running-time $\bigoh(n^{f(k)})$; problems admitting such algorithms belong to the class $\XP$. Showing that a problem is \Wh[t] rules out the existence of a fixed-parameter algorithm under the well-established assumption that \Wh[t]$\neq \FPT$.
\fi

\section{Parameterizing by the Number of Open Items}

We start our investigation by considering the complexity of \textsc{Envy-Free Allocation Extension} when the number $k$ of open items is included in the parameterization. In other words, we ask under which conditions one can efficiently solve the extension problem for only a few open items. 

As our baseline, we observe that the problem admits a trivial \XP\ algorithm: one can enumerate all possible assignments of the open items to agents in $\bigoh{(n^k)}$ time and check whether any of these is envy-free. 
However, such algorithms are considered highly inefficient in the community~\cite{DowneyF13,CyganFKLMPPS15}, and the central question tackled by this section is whether (or under which conditions) one can achieve fixed-parameter tractability.

Before settling the problem when parameterized by $k$ alone, we first provide a fixed-parameter algorithm for \textsc{Envy-Free Allocation Extension} when the parameterization also includes the number of agent types. Intuitively, this provides a positive result for the case where there are only a few open items and the agents can be partitioned into a few groups with identical preferences. We remark that this setting is not trivial, as agents which we consider to have identical preferences (e.g., due to polling limitations) can and will often have different pre-assigned items.

The following observation will be useful in the proof and follows directly from the definition of agent types and envy-free allocations.

\begin{observation} \label{obs:agenttypes}
Assume we are given an envy-free partial allocation $\gamma$ such that no pair of agents in agent type $X$ envy each other. If a solution $\pi$ assigns a set $Q$  of positively valued items to an agent $j\in X$, then $\pi$ must assign items of the same value as $v_j(Q)$ to \emph{all} other agents in $X$.
\end{observation}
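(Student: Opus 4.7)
The plan is to derive the claim by directly applying the envy-freeness inequalities between same-type agents in $X$, first for $\gamma$ alone and then for the full extension $\gamma \cup \pi$. Since all agents in $X$ share a common valuation, let $v$ denote that common valuation function, so that $v(S) = v_i(S) = v_j(S)$ for every bundle $S$ and every pair $i, j \in X$.

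First, I would unpack the hypothesis that no pair of agents in $X$ envy each other under $\gamma$. For any two agents $i, i' \in X$, the pair of non-envy inequalities $v(\gamma_i) \geq v(\gamma_{i'})$ and $v(\gamma_{i'}) \geq v(\gamma_i)$ collapse to the equality $v(\gamma_i) = v(\gamma_{i'})$. Hence all agents in $X$ already hold partial bundles of identical $v$-value under $\gamma$; call this common value $c$.

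Next, since $\pi$ is a solution, $\gamma \cup \pi$ is envy-free, so in particular no pair of agents in $X$ envies each other in the extension either. Applying the same two-sided non-envy argument to any $i \in X$ and $j$ under $\gamma \cup \pi$ yields $v(\gamma_i) + v(\pi_i) = v(\gamma_j) + v(\pi_j) = c + v(Q)$. Subtracting $v(\gamma_i) = c$ from both sides gives $v(\pi_i) = v(Q)$, which is exactly the claim. The "positively valued" qualifier is what gives the statement its bite: it forces $v(Q) > 0$, so that $\pi_i$ cannot be empty for any $i \in X$ and each same-type agent is genuinely required to receive items of that value. Honestly, I do not anticipate any real obstacle here — the observation is a one-step unpacking of the definitions of envy-freeness and agent types, and I expect the author's justification (if given at all) to amount to little more than the two-sided inequality argument above.
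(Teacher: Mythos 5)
Your proposal is correct and matches the paper's treatment: the paper gives no explicit proof, asserting the observation ``follows directly from the definition of agent types and envy-free allocations,'' and your two-sided non-envy argument (first under $\gamma$ to equalize the partial bundles within $X$, then under $\gamma \cup \pi$ together with additivity to force $v(\pi_i) = v_j(Q)$ for every $i \in X$) is exactly the intended unpacking.
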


\kntfpt*
\ifshort
\begin{proof}[Proof Sketch]
As the very first step, we observe that if there is an agent $i$ envying agent $j$, then $\pi$ must assign some open item to $i$. Hence, we begin by exhaustively branching, i.e., guessing, for each such agent $i$, which of the at most $k$ items will be assigned to $i$, and restart our considerations with the instance updated accordingly. For the following, we hence consider that no agent envies another.

We begin by partitioning the set of agent types into \emph{small} and \emph{large} types. An agent type is classed as small when it contains at most $k$ agents. 
We define $Z$ as the set of agents in all of the small types, the size of $Z$ is upper-bounded by $k\cdot \agenttypes$.

Next, we exhaustively branch to determine the following information about the allocation of the $k$ items to agents. First of all, we branch over all partitions of the open items into bundles, where we will assume that $\pi$ assigns each of the bundles to distinct agents. Next, for each of the at most $k$ bundles, we branch to either determine which of the agents in $Z$ will receive it, or that it will not be assigned to any agent in $Z$. The overall branching factor up to this point is upper-bounded by $k^k\cdot (k\agenttypes+1)^k$. 

At this point, we check whether any agent from $Z$ envies another agent; if that is the case, then we can correctly reject the current branch, as we may assume to have precisely guessed the bundles assigned to all agents in $Z$. Similarly, no pair of agents outside of $Z$ may envy each other due to the exhaustive procedure carried out in the first paragraph of the proof. At this point, we recall that by Observation~\ref{obs:agenttypes} and the definition of large agent types, each agent outside of $Z$ must receive a bundle that they value as $0$. Hence, if an agent outside of $Z$ were to envy an agent in $Z$, we may also correctly reject the current branch.

At this point, it remains to assign the remaining open items to the agents outside of $Z$ without creating envy. For each of the bundles, we can determine whether assigning it to an agent $i \not \in Z$ creates envy from any other agent in the instance (where for agents in $Z$ we assume them to receive the bundles specified in our branching, while for agents outside of $Z$, we assume that nothing changed). To complete the proof, we construct an auxiliary bipartite graph $G$ where: 
\begin{itemize}
\item one side contains the set of all remaining bundles (i.e., those not assigned to $Z$), 
\item the other side contains the set of all agents outside of $Z$, and,
\item there is an edge between bundle $b$ and agent $i$ if and only if $b$ can be assigned to $i$ without creating envy from any other agent (i.e., bundles of value $0$).
\end{itemize}
\noindent
To complete the proof, it suffices to check whether $G$ admits a matching that saturates the set of all bundles. The overall running time can be upper-bounded by $(k\cdot \agenttypes)^{\bigoh(k)}\cdot n^2$.
\end{proof}
\fi

\iflong
\begin{proof}
As the very first step, we observe that if there is an agent $i$ envying agent $j$, then $\pi$ must assign some open item to $i$. Hence, we begin by exhaustively branching, i.e., guessing, for each such agent $i$, which of the at most $k$ items will be assigned to $i$, and restart our considerations with the instance updated accordingly. For the following, we hence consider that no agent envies another. If we can't do this, meaning $k$ reached $0$, the algorithm stops.

We begin by partitioning the set of agent types into \emph{small} and \emph{large} types. An agent type is classed as small when it contains at most $k$ agents. We define $Z$ as the set of agents in all of the small types, the size of $Z$ is upper-bounded by $k\cdot \agenttypes$.

Secondly, we exhaustively branch to determine the following information about the allocation of the $k$ items to agents. First of all, we branch over all partitions of the open items into bundles, of which there are $k^k$, where we will assume that $\pi$ assigns each of the bundles to distinct agents. 
Next, for each of the at most $k$ bundles, we branch to determine either: which of the agents in $Z$ will receive it; or that it will not be assigned to any agent in $Z$, i.e., the bundle goes to an agent from one of the large agent types. This gives us $k\cdot\agenttypes+1$ many possible agents to assign a bundle to. The overall branching factor up to this point is upper-bounded by $k^k\cdot (k\cdot\agenttypes+1)^k$.

At this point, we check whether any agent from $Z$ envies another agent; if that is the case, then we can correctly reject the current branch, as we may assume to have precisely guessed the bundles assigned to all agents in $Z$. Similarly, no pair of agents outside of $Z$ may envy each other due to the exhaustive procedure carried out in the first paragraph of the proof. At this point, we recall that by Observation~\ref{obs:agenttypes} and the definition of large agent types, each agent outside of $Z$ must receive a bundle that they value as $0$. Hence, if an agent outside of $Z$ were to envy an agent in $Z$, we may also correctly reject the current branch.
 
Now, it remains to assign the remaining open items to the agents outside of $Z$, the large agent types, without creating envy. For each of the bundles, we can determine whether assigning it to an agent $i \not \in Z$ creates envy from any other agent in the instance (where for agents in $Z$ we assume them to receive the bundles specified in our branching, while for agents outside of $Z$, we assume that nothing changed). To complete the proof, we construct an auxiliary bipartite graph $G$ where: 
\begin{itemize}
\item one side contains the set of all remaining bundles (i.e., those not assigned to agents in $Z$), 
\item the other side contains the set of all agents outside of $Z$, and, 
\item there is an edge between bundle $b$ and agent $i$ if and only if $b$ can be assigned to $i$ without creating envy from any other agent.
\end{itemize}
\noindent
We compute a maximum matching for this bipartite graph in polynomial time.

The matching outputs an allocation of bundles to agents.
It suffices to check whether $G$ admits a matching that saturates all bundles, i.e., all items have been assigned to some agent. 
Note that, if not all bundles are assigned, then there exists some bundle we couldn't assign to an agent without causing envy. 
Thus, if we did find such a matching, then there is an envy-free solution (by assigning the bundles according to the matching). 

Conversely, if there is a solution for \extensionproblem, these 3 steps are guaranteed to find it. This holds because in the case of the small agent types we enumerated their possible bundles, and there is no solution where the large agent types receive items that they value positively, because of Observation \ref{obs:agenttypes}. 
The overall running time can be upper-bounded by $(k\cdot \agenttypes)^{\bigoh(k)}\cdot n^2$.
\end{proof}
\fi

Turning back to \textsc{Envy-Free Allocation Extension} parameterized by $k$ alone, our next result shows that the aforementioned trivial \XP\ algorithm can be viewed as ``optimal'' in the sense of the problem not admitting any fixed-parameter algorithm under the \iflong well-established \fi complexity assumption of $\W[1]\not \subseteq \FPT$. Naturally, one would then ask whether fixed-parameter tractability can be achieved at least when the parameterization is enriched by the number $m_t$ of item types---a setting that can be seen as complementary to the one settled in Theorem~\ref{thm:kntfpt}. Surprisingly, we exclude fixed-parameter algorithms for the problem, even in this significantly more restrictive setting via a highly involved reduction. 

\kmtwhard*

\subsection{Proof of Theorem~\ref{thm:kmtwhard}}


\iflong
We reduce from the classical \W[1]-hard \multicolorclique problem: Given a $q$-partite graph, where each part is assigned a unique color, decide whether the graph contains a clique of size $q$. Let $\mathcal I = ((\text V,\text E),q)$ be an instance of \multicolorclique, where $q$ is the number of colors. Without loss of generality, assume that E only contains edges adjacent to vertices of different color and that for any pair of colors, there exist at least one pair of adjacent vertices (otherwise trivially a multicolored clique does not exist). 
Let $\text V = \bigcup_{1\leq k\leq q}\text V_k$, where $\text V_k$ is the set that contains the vertices of color $k$ and let $\text E = \bigcup_{1\leq i< j\leq q} \text E_{ij}$, where $\text E_{ij}$ is the set that contains the edges of $\text E$ that are adjacent to vertices that belong to sets $\text V_i$ and $\text V_j$ respectively. 
We will construct an instance $\mathcal I'$ of \extensionproblem\ and prove that an envy-free allocation extension for $\mathcal I'$ exists if and only if $\mathcal I$ contains a clique of size~$q$.

\paragraph{Construction.} For each vertex set $\text V_i$ we assume a \emph{vertex-agent group} $V_i = \{\alpha_j^i|~j\in \text V_i\}$ and for each edge set $\text E_{ij}$ an \emph{edge-agent group} $E_{ij} =\{\eta_z^{ij}|~z\in \text E_{ij}\}$. To avoid notational overflow, whenever it is clear from the context we omit the superscript in the notation of an agent. Moreover we assume that the agents in each group are provided in an arbitrary order, i.e., the agents in the vertex-agent group $V_i$ have the form $\alpha_1,\dots, \alpha_{|V_i|}$ and analogously the agents in the edge-agent group $E_{ij}$ have the form of $\eta_1,\dots, \eta_{|E_{ij}|}$. 

A notion that will be useful in the subsequent arguments is that of adjacent agents: 

\begin{definition}[Adjacent agents]
Let $z = (x,y)$ be an edge in $\mathcal I$ and let $\eta_z,\alpha_x\text{ and }  \alpha_y$ be the corresponding agents in $\mathcal I'$. We refer to any pair of these three agents as adjacent agents.
\end{definition}

For each vertex-agent group $V_i$, the construction will make use of pre-allocated items from the following three item types $\{\Box^i,\triangle^i, \bigstar^i\}$, while for each edge-agent group $E_{ij}$, the construction will also make use of pre-allocated items from the following three item types $\{\Box^{ij},\triangle^{ij}, \bigstar^{ij}\}$. The partial allocation $\gamma$ will provide the agents of the vertex-agent group $V_i$ with some combination of items from ``their'' types $\{\Box^i,\triangle^i, \bigstar^i\}$, while the agents of the edge-agent group $E_{ij}$ will receive items not only from ``their'' types $\{\Box^{ij},\triangle^{ij}, \bigstar^{ij}\}$ but also items of the types $\{\Box^i,\triangle^i,\Box^j,\triangle^j \}$. 

\paragraph{Partial allocation.}
We provide the details of how the partial allocation $\gamma$ is constructed. 

\begin{itemize}

\item Each agent $\alpha_x$ where $x\in[|V_i|]$, within a vertex-agent group $V_i$ holds a bundle consisting of:
\begin{itemize}
\item $x$ copies of $\Box^i$,
\item $2|V_i|^2- x^2-x$ copies of $\triangle^i$,
\item one copy of $\bigstar^{i}$.
\end{itemize}
Overall the bundle of the agent $\alpha_x$ in $V_i$ has the following form $\{\underbrace{\Box^i,\dots,\Box^i}_{x}, \underbrace{\triangle^i,\dots,\triangle^i}_{2|V_i|^2- x^2-x},\bigstar^i\}$.

\item Each agent $\eta_z$, within the edge-agent group $E_{ij}$ that is adjacent to some agents $\alpha_x\in V_i$ and $\alpha_y\in V_j$, holds a bundle consisting of
\begin{itemize}
\item $z$ copies of $\Box^{ij}$,
\item $2|E_{ij}|^2- z^2-z$ copies of $\triangle^{ij}$, 
\item one copy of $\bigstar^{ij}$,  
\item precisely the same number of $\Box^i$ ($\Box^j$) and $\triangle^i$ ($\triangle^j$) items as agent $\alpha_x$ ($\alpha_y$). That is $x$ copies of $\Box^i$, $y$ copies of $\Box^j$, $2|V_i|^2-x^2-x$ copies of $\triangle^i$ and $2|V_j|^2-y^2-y$ copies of $\triangle^j$.
\end{itemize}
Overall the bundle of an agent $\eta_z$ in $E_{ij}$ has the following form $$\{\underbrace{\Box^{ij},\dots,\Box^{ij}}_{z},\underbrace{\triangle^{ij},\dots \triangle^{ij}}_{2|E_{ij}|^2-z^2-z},\bigstar^{ij},\\ \underbrace{\Box^i,\dots,\Box^i}_{x},\underbrace{\triangle^i,\dots\triangle^i}_{2|V_i|^2-x^2-x},\underbrace{\Box^j\dots,\Box^j}_{y}, \underbrace{\triangle^j\dots,\triangle^j}_{2|V_j|^2-y^2-y}\}.$$
\end{itemize}

\noindent
We furthermore assume precisely  $q+\binom{q}{2}$ pairwise-distinct open items. A set $S = \{s_1,\dots, s_{q}\}$ of vertex items, containing one item tailored for each vertex-agent group (or equivalently color). A set $T = \{\tau_{12},\tau_{13},\cdots, \tau_{(q-1)(q)}\}$ of edge items, containing one item tailored for each edge-agent group $E_{ij}$.

Combining the open items with the pre-allocated items, we get that the total number of item types used in the generated instance $\mathcal{I'}$ is $4q+4\binom{q}{2}$.

\medskip
\noindent
\textbf{Agents' valuations.} We proceed to define the valuation function of the agents. Each agent will have a unique valuation function defined as follows:

\begin{itemize}
\item Each agent $\alpha_x$ within a vertex-agent group $V_i$ values:
\begin{itemize}
\item $\Box^i$ as $2x+1$,
\item $\triangle^i$ as $1$,
\item $\bigstar^{i}$ as $0$.
\end{itemize}
This describes the valuation for items allocated to $\alpha_x$ in $\gamma$. Next, for the item types that are not pre-assigned to $\alpha_x$, the agent uses the following valuation:
\begin{itemize}
\item Each $\triangle^\circ$ and each $\Box^\circ$ for $\circ\neq i$ has a value of $0$, 
\item For each $j\neq i$, $\bigstar^{ij}$ has a value of $0$,
\item Each $\bigstar^\circ$ whose valuation is not defined up to now, has a value that is precisely equal to the value of $\alpha_x$'s bundle in the partial allocation $\gamma$", i.e., $2|V_i|^2+x^2$ (see Lemma~\ref{lemma1}). 
\end{itemize}
We observe that the construction at this point ensures that agent $\alpha_x$ values his own bundle identically to that of any agent from a different vertex-agent group. Moreover, $\alpha_x$ values his own bundle identically to that of edge agents that are adjacent to $\alpha_x$, while for the remaining agents in those edge-agent groups, he prefers his own bundle.

Regarding the open items, $\alpha_x$ values $s_i$ as $1$ and for all $j\neq i$ the items $t_{ij}$ as $1$, and all the remaining open items as $0$.

\item Each agent $\eta_z$ within an edge-agent group $E_{ij}$ values:
\begin{itemize}
\item $\Box^{ij}$ as $2z+1$,
\item $\triangle^{ij}$ as $1$,
\item $\bigstar^{ij}$ as well as all remaining items included in its bundle by $\gamma$ as $0$.
\end{itemize}

Next, for the item types that are not pre-assigned to $\eta_z$, the agent uses the following valuation:
\begin{itemize}
\item each $\triangle^\circ$ and each $\Box^\circ$ for $\circ\neq ij$ has a value of $0$,
\item each $\bigstar^\circ$ where $\circ\neq ij$ has a value that is precisely equal to the value of $\eta_z$'s bundle as assigned by $\gamma$, i.e., $2|E_{ij}|+z^2$. 
\end{itemize}
\end{itemize}

We observe that by construction, $\eta_z$ will value its bundle exactly the same as that of every agent outside his group. Regarding the open items, $\eta_z$ values only the open item $t_{ij}$ as $1$ and all others as $0$.

\medskip
\noindent
We proceed by establishing two lemmas that must hold in any envy-free allocation extension. The statements of these lemmas arise from the way we defined the partial allocation $\gamma$ and set the agents' valuations.

First we show that we can allocate a single open item to an agent of some group, without creating envy among the agents within that group. We will prove that in the partial allocation $\gamma$, no envy exists among agents within the same agent group. In fact every agent strictly prefers its own bundle to the bundle of any other agent within the same group.

\begin{lemma}\label{lemma1}
Let $V_i$ be a vertex-agent group and consider any two vertex agents $\alpha_x$ and $\alpha_y$ within that group. It holds that $v_{\alpha_x}(\gamma_{\alpha_x}) > v_{\alpha_x}(\gamma_{\alpha_y})$.
 \end{lemma}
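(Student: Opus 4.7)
The proof is a direct computation of both valuations, followed by a very clean algebraic simplification. My plan is to first write out $v_{\alpha_x}(\gamma_{\alpha_x})$ term by term using $\alpha_x$'s own valuation function applied to $\alpha_x$'s own bundle. The $x$ copies of $\Box^i$ contribute $x(2x+1)$, the $2|V_i|^2 - x^2 - x$ copies of $\triangle^i$ contribute $2|V_i|^2 - x^2 - x$, and the single $\bigstar^i$ contributes $0$. Summing yields $v_{\alpha_x}(\gamma_{\alpha_x}) = 2|V_i|^2 + x^2$, which incidentally confirms the value used elsewhere in the construction to calibrate the $\bigstar$ valuations.

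Next I would compute $v_{\alpha_x}(\gamma_{\alpha_y})$. The key point is that $\alpha_x$ uses \emph{its own} per-item values when evaluating $\alpha_y$'s bundle: each $\Box^i$ is still worth $2x+1$ to $\alpha_x$, not $2y+1$. Therefore the $y$ copies of $\Box^i$ held by $\alpha_y$ contribute $y(2x+1)$, the $2|V_i|^2 - y^2 - y$ copies of $\triangle^i$ contribute $2|V_i|^2 - y^2 - y$, and the single $\bigstar^i$ contributes $0$. Summing yields $v_{\alpha_x}(\gamma_{\alpha_y}) = 2|V_i|^2 + 2xy - y^2$.

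Finally I would subtract:
\[
v_{\alpha_x}(\gamma_{\alpha_x}) - v_{\alpha_x}(\gamma_{\alpha_y}) \;=\; (2|V_i|^2 + x^2) - (2|V_i|^2 + 2xy - y^2) \;=\; x^2 - 2xy + y^2 \;=\; (x-y)^2.
\]
Since $\alpha_x$ and $\alpha_y$ are distinct agents in $V_i$, their indices satisfy $x \neq y$, so $(x-y)^2 > 0$, establishing the strict inequality $v_{\alpha_x}(\gamma_{\alpha_x}) > v_{\alpha_x}(\gamma_{\alpha_y})$.

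There is no real obstacle here; the whole content of the lemma is that the coefficients $x,\, 2|V_i|^2 - x^2 - x,\, 2x+1$ in the construction were chosen precisely so that the difference collapses to a square. The only thing to be careful about is not accidentally using $\alpha_y$'s valuation (with $2y+1$) instead of $\alpha_x$'s when scoring $\gamma_{\alpha_y}$, which is the whole reason the envy goes one way: each agent overrates its own $\Box^i$ holdings relative to how it rates another agent's $\Box^i$ holdings. An analogous calculation will presumably be used later for edge-agent groups with the $\Box^{ij}, \triangle^{ij}$ items and parameters $|E_{ij}|, z$.
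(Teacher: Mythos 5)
Your proposal is correct and follows essentially the same route as the paper's proof: computing $v_{\alpha_x}(\gamma_{\alpha_x}) = 2|V_i|^2 + x^2$ and $v_{\alpha_x}(\gamma_{\alpha_y}) = 2|V_i|^2 + 2xy - y^2$ term by term and comparing. Your explicit observation that the difference collapses to $(x-y)^2 > 0$ is a slightly cleaner way of stating the inequality the paper asserts directly.
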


\begin{proof}
From the way we defined the partial allocation $\gamma$ and set the valuations, we get that: 
\begin{equation*}
\begin{split}
v_{\alpha_x}(\gamma_{\alpha_x}) &= x\cdot v_{\alpha_x}(\Box^i) + (2|V_i|^2-x^2-x)\cdot v_{\alpha_x}(\triangle^i)
+ v_{\alpha_x}(\bigstar^i)\\
&= x\cdot(2x+1) + (2|V_i|^2-x^2-x)\\
&= 2|V_i|^2 + x^2.
\end{split}
\end{equation*}
The valuation of agent $\alpha_x$ for the bundle of agent $\alpha_y$ in $\gamma$ is:
\begin{equation*}
\begin{split}
v_{\alpha_x}(\gamma_{\alpha_y}) &= y\cdot v_{\alpha_x}(\Box^i) + (2|V_i|^2-y^2-y)\cdot v_{\alpha_x}(\triangle^i) 
+ v_{\alpha_x}(\bigstar^i)\\ &= y\cdot(2x+1) + (2|V_i|^2-y^2-y)\\
&= 2|V_i|^2 +2xy -y^2.
\end{split}
\end{equation*}
Since the inequality $2|V_i|^2+x^2 > 2|V_i|^2 +2xy -y^2$ holds, we conclude that $v_{\alpha_x}(\gamma_{\alpha_x}) > v_{\alpha_x}(\gamma_{\alpha_y})$. Furthermore, since all valuations are integer values, even assigning one item that is valued 1 by every agent in a group, to some agent within that group, does not create envy among the agents in that group.
\end{proof}
Using the exact same arguments we can prove the following lemma for edge-agent groups.

\begin{lemma}\label{lemma2}
Let $E_{ij}$ be an edge-agent group and consider any two edge agents $\eta_x$ and $\eta_y$ within that group. It holds that $v_{\eta_x}(\gamma_{\eta_x}) > v_{\eta_x}(\gamma_{\eta_y})$.
 \end{lemma}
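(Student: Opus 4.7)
The plan is to follow exactly the template of Lemma~\ref{lemma1}, but for edge agents. The key observation that makes the argument essentially identical is that, although an edge-agent bundle additionally carries vertex-type items of the form $\Box^i, \Box^j, \triangle^i, \triangle^j$ tied to the two endpoints of the corresponding edge, agent $\eta_x$ values every such vertex-type item at $0$, and moreover the bundle $\gamma_{\eta_y}$ contains no $\bigstar^\circ$ for $\circ \neq ij$. Hence, when comparing $v_{\eta_x}(\gamma_{\eta_x})$ with $v_{\eta_x}(\gamma_{\eta_y})$, only the contributions coming from the edge-type items $\Box^{ij}$, $\triangle^{ij}$, and $\bigstar^{ij}$ matter; the vertex-type "padding" cancels out.

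First I would compute $v_{\eta_x}(\gamma_{\eta_x})$ directly from the construction:
\[
v_{\eta_x}(\gamma_{\eta_x}) \;=\; x(2x+1) + \bigl(2|E_{ij}|^2 - x^2 - x\bigr)\cdot 1 + 0 \;=\; 2|E_{ij}|^2 + x^2.
\]
Next I would compute $v_{\eta_x}(\gamma_{\eta_y})$ using the valuation rules together with the observation above, so that only the $\Box^{ij}$ and $\triangle^{ij}$ items in $\gamma_{\eta_y}$ contribute:
\[
v_{\eta_x}(\gamma_{\eta_y}) \;=\; y(2x+1) + \bigl(2|E_{ij}|^2 - y^2 - y\bigr) + 0 \;=\; 2|E_{ij}|^2 + 2xy - y^2.
\]
The required inequality then reduces to $x^2 + y^2 > 2xy$, i.e., $(x-y)^2 > 0$, which holds since $x \neq y$. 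Finally, as in Lemma~\ref{lemma1}, the strict integrality of valuations guarantees a margin of at least $1$, so the envy-freeness within $E_{ij}$ is preserved even after a single open item of value $1$ is assigned to one of its agents.

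The only subtlety is the bookkeeping step confirming that the extra vertex-type items in an edge bundle contribute zero on both sides of the comparison; once this is established, the computation is word-for-word the one carried out in Lemma~\ref{lemma1}, and I do not expect any further obstacle.
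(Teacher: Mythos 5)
Your proof is correct and matches the paper's approach: the paper simply states that Lemma~\ref{lemma2} follows by ``the exact same arguments'' as Lemma~\ref{lemma1}, and your computation is precisely that argument, with the added (correct) bookkeeping that the vertex-type items and the $\bigstar^{ij}$ copy in an edge-agent's bundle are all valued at $0$ by $\eta_x$ and hence drop out. The final inequality reduces to $(x-y)^2>0$ exactly as in the paper's proof of Lemma~\ref{lemma1}.
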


The next lemma establishes that for an extension for $\gamma$ to be envy-free, adjacent agents must receive triplets of open items that correspond to adjacent vertices in $\mathcal I$.

\begin{lemma}\label{lemma3}
In any envy-free allocation extension for $\gamma$, every triplet of open items $s_i,\tau_{ij}\text{ and }s_j$ must be allocated to adjacent agents belonging within the groups $V_i,E_{ij}\text{ and }V_j$ respectively.
\end{lemma}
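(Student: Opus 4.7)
The plan is to establish Lemma~\ref{lemma3} in three consecutive steps that follow the tripartite structure of the statement: first show that every edge item $\tau_{ij}$ must be allocated to some agent in $E_{ij}$; second show that every vertex item $s_i$ must be allocated to some agent in $V_i$; and third use a quadratic envy gap built into the partial allocation to force the chosen edge agent to be adjacent to both chosen vertex agents.

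For the first step, recall that the construction guarantees $v_{\eta_z}(\gamma_{\eta_z}) = v_{\eta_z}(\gamma_a)$ for every $\eta_z \in E_{ij}$ and every agent $a$ outside $E_{ij}$, while $\eta_z$ values $\tau_{ij}$ at $1$ and all other open items at $0$. Hence, if $\tau_{ij}$ were given to some $a \notin E_{ij}$, every $\eta_z$ would envy $a$, because no other open item could be used to compensate. Thus $\tau_{ij}$ lands on some $\eta_z \in E_{ij}$, and by Lemma~\ref{lemma2} combined with integrality of the valuations this assignment causes no envy inside $E_{ij}$.

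For the second step, fix a color $i$. The valuations are engineered so that each $\alpha_x \in V_i$ ties in $\gamma$ with every vertex agent of a different color, with every edge agent of a group not adjacent to $V_i$, and with the edge agents of $E_{ij}$ that are adjacent to $\alpha_x$. The only open items $\alpha_x$ values positively are $s_i$ and the $\tau_{ij}$'s for $j \neq i$, and the first step rules out $\alpha_x$ receiving any $\tau$. Suppose now $s_i$ were given to some $a \notin V_i$; then one can identify an $\alpha_x \in V_i$ tied in $\gamma$ with $a$ (any $\alpha_x$ when $a$ is a vertex agent of a different color or an edge agent of a group not adjacent to $V_i$, and the $V_i$-side endpoint of $a$ when $a \in E_{ik}$), and this $\alpha_x$ receives value $0$ from $\pi$ while $a$ gains $s_i$, creating envy. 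Therefore $s_i$ must be allocated to some $\alpha_x \in V_i$.

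For the third step, denote by $\alpha_x \in V_i$, $\alpha_y \in V_j$ and $\eta_z \in E_{ij}$ the unique recipients of $s_i$, $s_j$ and $\tau_{ij}$, and let $\alpha_{x'} \in V_i$ be the $V_i$-side endpoint of the edge represented by $\eta_z$. A direct calculation from the $\gamma$-bundles and the valuations gives, for every $\alpha_w \in V_i$,
\[
v_{\alpha_w}(\gamma_{\alpha_w}) - v_{\alpha_w}(\gamma_{\eta_z}) = (w - x')^2.
\]
For any $\alpha_w$ with $w \neq x$, the previous steps ensure $\alpha_w$ receives no positively-valued open item, so envy-freeness toward $\eta_z$, who holds $\tau_{ij}$ valued $1$ by $\alpha_w$, forces $(w - x')^2 \geq 1$, that is $w \neq x'$. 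Since this must hold for every $w \in [|V_i|] \setminus \{x\}$, we obtain $x' = x$, so $\eta_z$ is adjacent to $\alpha_x$; the symmetric argument in $V_j$ shows $\eta_z$ is also adjacent to $\alpha_y$. The main obstacle lies in this final step: one has to recognize that the deliberately chosen counts $x$ and $2|V_i|^2 - x^2 - x$ produce a perfect-square gap $(w - x')^2$ that is simultaneously large enough (when $w \neq x'$) to absorb the single unit of envy induced by $\tau_{ij}$ and exactly zero (when $w = x'$) to encode adjacency.
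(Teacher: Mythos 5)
Your proof is correct and follows essentially the same route as the paper's: first pin $\tau_{ij}$ to $E_{ij}$ using the fact that edge agents are exactly tied with everyone outside their group, then pin $s_i$ to $V_i$ using the corresponding ties for vertex agents (including the tie with the $V_i$-endpoint of any adjacent edge agent), and finally force adjacency via the quadratic gap. Your explicit $(w-x')^2$ computation in the last step is just a slightly more quantitative phrasing of the paper's argument that only the incident vertex agents tie with $\eta_z$ and must therefore be the ones compensated with $s_i$ and $s_j$.
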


\begin{proof}
First we prove that any open edge-item $\tau_{ij}$ must be allocated to some agent within the edge-agent group $E_{ij}$. If we allocate $\tau_{ij}$ to some agent within another edge-agent group say $E_{\kappa\lambda}$, then since all agents in $E_{ij}$ 
value item $\bigstar^{\kappa\lambda}$ (which appears in the bundle of every agent in $E_{\kappa\lambda}$) precisely equal to their bundle in $\gamma$, they become envious.
Similarly, if we allocate $\tau_{ij}$ to a vertex-agent group $V_i$, then since all agents in $E_{ij}$ value item $\bigstar^i$ (which appears in the bundle of every agent in $V_i$) equal to their bundle in $\gamma$, they all become envious. Combining this argument with the fact that edge-agents are solely interested in open edge-items, shows that the edge-item $\tau_{ij}$ must be allocated to some agent in the group $E_{ij}$.

Additionally, the open vertex-item $s_i$ must be allocated to an agent within the vertex-agent group $V_i$. If we allocate item $s_i$ to some agent within another vertex-agent group $V_j$, then from the way we defined the valuations each agent in $V_i$ values item $\bigstar^j$ (which is owned by every agent in $V_j$) precisely equal to their bundle as assigned in $\gamma$. Thus allocating $s_i$ to some agent within $V_j$ makes all agents in $V_i$ envious.
Similarly, if we allocate $s_i$ to some agent within the edge-agent group $E_{\kappa\lambda}$, then since all agents in $V_i$ value item $\bigstar^{\kappa\lambda}$ precisely equal to their bundle in $\gamma$ they all become envious. 

If $s_i$ is assigned to some agent $\eta_z$ within the edge-agent group $E_{ij}$, then there must exist an adjacent agent $\alpha_x$ that belongs to the vertex-agent group $V_i$, that becomes envious of $\eta_z$. That holds because $\alpha_x$ values the bundle of $\eta_z$ equally to his bundle in the partial allocation $\gamma$. Then, we cannot allocate any other open item in such a way to reduce this envy.

Finally we can show that every triplet of open items $s_i,\tau_{ij}\text{ and }s_j$ must be allocated to adjacent agents within the groups $V_i,E_{ij}\text{ and }V_j$ respectively. Indeed, from the above arguments we know that $\tau_{ij}$ will be allocated to some agent, say $\eta_z$, within $E_{ij}$. From the way we defined the valuations, only the two adjacent agents to $\eta_z$, that belong to vertex-agent groups $V_i$ and $V_j$ become envious, since both these agents value the bundle of $\eta_z$ precisely the same as their bundle in $\gamma$. Thus, to eliminate envy we must allocate the open items $s_i$ and $s_j$ to these agents respectively. Notice that from Lemmas \ref{lemma1} and \ref{lemma2} we do not induce envy among agents within the same group since we allocate only one item to each group.
\end{proof}

We proceed to show that a $q$-clique in the initial instance $\mathcal I$  exists if and only if an envy-free allocation extension for $\gamma$ exists in the constructed instance $\mathcal I'$.

 \paragraph{From $\mathcal I$ to $\mathcal I'$.} Assume that $\mathcal I$ contains a clique of size $q$. We extend the partial allocation $\gamma$ of $\mathcal I'$ by allocating the set of open items as follows: Pick an edge $z = (x,y)$ of the clique and without loss of generality assume that vertex $x$ has color $i$ and vertex $y$ has color $j$. Moreover let $\eta_z, \alpha_x\text{ and }\alpha_y$ be the corresponding adjacent agents in $\mathcal I'$. Allocate the triplet of items $s_i,\tau_{ij}\text{ and }s_j$ to the adjacent agents $\alpha_{x},\eta_z\text{ and },\alpha_y$ respectively. From Lemmas \ref{lemma1} and \ref{lemma2} we get that no envy emerges among agents of the same agent group. Additionally from Lemma \ref{lemma3}, we get that no envy emerges among agents of different agent groups. Consequently allocating the items that way produces an envy-free allocation extension for $\gamma$.
 
 \paragraph{From $\mathcal I'$ to $\mathcal I$.} Assume that the partial allocation $\gamma$ of $\mathcal I'$ extends to a complete envy-free allocation, say $\gamma'$. As argued above, the statement of Lemma \ref{lemma3} must be satisfied in order for $\gamma'$ to be envy-free. That means  that every triplet of open items $s_i, \tau_{ij}\text{ and } s_j$ is allocated to adjacent agents within the groups $V_i,E_{ij}\text{ and } V_j$ respectively. Moreover notice that only one agent within every vertex/edge-agent group receives an open vertex/edge item. Combining these facts, we get that any two agents of the vertex-agent groups that receive open items in $\gamma'$, must be adjacent agents. This implies that in $\mathcal I$, the vertices that correspond to the agents of the vertex-agent groups that receive the items are adjacent vertices of different color. Consequently these vertices form a multicolor clique of size $q$.
\fi

\ifshort
\begin{proof}[Proof Sketch]
We reduce from the classical \W[1]-hard \multicolorclique problem: given a $q$-partite (where each part is assigned a unique color) graph, decide whether the graph contains a clique of size $q$. Let $\mathcal I = ((\text V,\text E),q)$ be an instance of \multicolorclique, where $q$ is the number of colors and without loss of generality assume that $E$ contains only edges adjacent to vertices of different colors. 
Let $\text V = \bigcup_{1\leq k\leq q}\text V_k$, where $\text V_k$ is the set that contains all vertices of color $k$ and $\text E = \bigcup_{1\leq i< j\leq q} \text E_{ij}$, where $\text E_{ij}$ is the set that contains all edges of $\text E$ that are adjacent to vertices that belong to sets $\text V_i$ and $\text V_j$ respectively. 
We will construct an instance $\mathcal I'$ of \extensionproblem\ and prove that an envy-free extension for $\mathcal I'$ exists if and only if $\mathcal I$ contains a clique of size $q$.

\textbf{Construction:} For each vertex set $\text V_i$ we assume a \emph{vertex-agent group} $V_i = \{\alpha_j^i|j\in \text V_i\}$ and for each edge set $\text E_{ij}$ an \emph{edge-agent group} $E_{ij} =\{e_\ell^{ij}|\ell\in \text E_{ij}\}$. For each vertex-agent group $V_i$, the construction will make use of three pre-allocated item types $\{\Box^i,\triangle^i, \bigstar^i\}$. Each edge-agent group will also make use of three pre-allocated item types $\{\Box^{ij},\triangle^{ij}, \bigstar^{ij}\}$. The pre-allocation $\gamma$ will only provide agents in vertex-agent group $V_i$ with some combination of items from ``their'' types $\{\Box^i,\triangle^i, \bigstar^i\}$, while agents in edge-agent group $E_{ij}$ will receive items not only from ``their'' types $\{\Box^{ij},\triangle^{ij}, \bigstar^{ij}\}$ but also items of the types $\{\Box^i,\triangle^i,\Box^j,\triangle^j \}$. 

Below, we provide the details of how the partial allocation $\gamma$ is constructed. We assume that the agents in each group are provided in an arbitrary order, i.e., the agents in the vertex group $V_i$ have the form $\alpha_1,\dots, \alpha_{|V_i|}$ (and analogously for agents in edge groups).
\paragraph{Partial allocation $\gamma$.}
Each agent $\alpha_x$ within a vertex-agent group $V_i$ holds a bundle consisting of (1) one copy of $\bigstar^{i}$, (2) $x$ copies of $\Box^i$, and (3) $2|V_i|^2- x^2-x$ copies of $\triangle^i$.
Each agent $\eta_z$ within an edge-agent group $E_{ij}$ representing the edge between $\alpha_x\in V_i$ and $\alpha_y\in V_j$ holds a bundle consisting of (1) one copy of $\bigstar^{ij}$, (2) $z$ copies of $\Box^{ij}$, (3) $2|V_i|^2- z^2-z$ copies of $\triangle^{ij}$, and moreover (4) precisely the same number of $\Box^i$ ($\Box^j$) and $\triangle^i$ ($\triangle^j$) items as $\alpha_x$ ($\alpha_y$).

The instance will furthermore contain precisely $q+\binom{q}{2}$ pairwise-distinct open items: $s_1,\dots, s_q$ (one for each vertex agent group) and $t_{12}, t_{13}, \dots, t_{(q-1)q}$ (one for each edge agent group). This leaves the total number of item types at $4q+4\binom{q}{2}$. Intuitively, while the initial allocation $\gamma$ is envy-free, the constructed instance of \extensionproblem\ will force us to allocate the open items in a way where (1) each vertex and each edge agent group receives precisely one open item, and (2) to prevent envy, the recipient in the edge agent group must represent an edge such that both incident vertex-agents are recipients as well. However, to achieve this, we require very careful calibration of $\gamma$ and the specific valuations of the agents; in fact, each agent will have an entirely unique valuation function.

\paragraph{The valuation function.}
Each agent $\alpha_x$ within a vertex-agent group $V_i$ values: (1) $\triangle^i$ as $1$, (2) $\Box^i$ as $2x+1$, (3) $\bigstar^{i}$ as $0$. This describes the valuation for items allocated to $\alpha_x$ by $\gamma$. Next, for the item types that are not pre-assigned to $\alpha_x$, the agent uses the following valuation: (1) each $\triangle^\circ$ and each $\Box^\circ$ for $\circ\neq i$ has a value of $0$, (2) each $j\neq i$, $\bigstar^{ij}$ and/or $\bigstar^{ji}$ has a value of $0$, and (3) each $\bigstar^\circ$ whose valuation is not defined up to now has a value that is precisely equal to the value of $\alpha_x$'s bundle as assigned by $\gamma$.

We observe that the construction at this point ensures that $\alpha_x$ values its own bundle identically to that of any agent from a different vertex agent group. It is less obvious---but provable---that $\alpha_x$ strictly prefers their own bundle to the one of every other agent in their own vertex agent group. Moreover, $\alpha_x$ values its own bundle identically to that of edge agents that are incident to $\alpha_x$ in the graph, while for the other agents in those edge agent groups it prefers its own bundle. 

For the open items, $\alpha_x$ values $s_i$ as $1$, the items $t_{ij}$ and/or $t_{ji}$ for all $j\neq i$ as $1$, and all remaining open items as $0$. 

\item Each agent $\eta_z$ within an edge-agent group $E_{ij}$ values: (1) $\triangle^{ij}$ as $1$, (2) $\Box^{ij}$ as $2z+1$, (3) $\bigstar^{ij}$ as well as all remaining items included in its bundle by $\gamma$ as $0$. Next, for the item types that are not pre-assigned to $\eta_z$, the agent uses the following valuation: (1) each $\triangle^\circ$ and each $\Box^\circ$ for $\circ\neq ij$ has a value of $0$, and (2) each $\bigstar^\circ$ where $\circ\neq ij$ has a value that is precisely equal to the value of $\eta_z$'s bundle as assigned by $\gamma$.

We observe that by construction, $\eta_z$ will value its bundle exactly the same as as that of every agent outside of its own group. Moreover, similarly as before, it is possible to prove that $\eta_z$ strictly prefers their own bundle to the one of every other agent in their own edge agent group. Finally, for the open items, $\eta_z$ values its ``own'' open item $t_{ij}$ as $1$ and all others as $0$.

To complete the proof, two tasks remain. First, establish that the numbers are  indeed set up in a way to ensure that each agent strictly prefers their own bundle to that of other agents in their own group. The second is then to show that every $q$-clique in the initial graph corresponds to an envy-free allocation in the constructed instance following the intuition provided at the end of the partial allocation $\gamma$.
\qedhere
\end{proof}
\fi

\section{Parameterizing by the Number of Recipients}

In this section, we address the more complex situation of needing to allocate (a possibly large number of) open items to at most $p$ recipients. As mentioned in the Introduction, here it will be important to distinguish whether the set of these recipients is fixed and provided on the input (\REFAE), or whether the task also includes the identification of this set (\FEFAE). Moreover---and unlike in the case studied in the previous section---the existence of efficient algorithms will sometimes depend on whether we may assume the valuations to be encoded in unary, or whether they are encoded in binary. It will be useful to recall that lower bounds achieved in the former (latter) setting are called \emph{strong} (\emph{weak}).

As regards the complexity of these problems when parameterized by $p$ alone, it is easy to observe that both problems are weakly \NP-hard already when $p\leq 2$ as this directly generalizes both the previously-studied setting of all items being open~\cite{BliemBN16} and \textsc{Subset Sum}, both of which are weakly \NP-hard.
On the other hand, the problem is trivial for $p=1$. Below, we show that in the extension setting studied here, \NP-hardness holds for $p=2$ even in the unary (i.e., strong) setting, contrasting the known existence of an \XP algorithm for that case when all items are open~\cite{BliemBN16}.

\kappahardtwo*
\iflong
\begin{proof}

\fi
\ifshort
\begin{proof}[Proof Sketch]
\fi
We reduce from \indset that is known to be \NP-hard. An instance $\mathcal I$ of the \indset problem consists of a graph $(V,E)$ and an integer $\ell$. The task is to decide whether the graph contains a subset $V'\subseteq V$ where no two vertices in $V'$ are adjacent and $|V'| = \ell$. We construct an instance $\mathcal I'$ of \extensionproblem and prove that an envy-free allocation extension for $\mathcal I'$ with only two recipients exists if and only if instance $\mathcal I$ contains an independent set of size $\ell$.

\paragraph{Construction.} We assume a set $N = \{1,\cdots,|E|+1, |E|+2\}$ of $|E| + 2$ agents, consisting of a corresponding agent for each edge in $E$ and two additional agents labeled for simplicity as $|E|+1$ and $|E|+2$. We assume a set of items partitioned into the sets $G = \{g_1,\cdots,g_{|E|+1}, g_{|E|+2}\}$, that contains one item for each agent in $N$ and the set $\text{A} = \{a_1,\cdots, a_{|V|}\}$, that contains one item for each vertex of $V$. Finally, we assume the partial allocation $\gamma$, where for each item $g_i\in G$ and for each agent $i \in N$, $\gamma(g_i) =i$, i.e, agent $i$'s bundle in the partial allocation $\gamma$ is item $g_i$, while the items in $A$ are open and yet-to-be allocated.

The valuations for each agent $i \in N\setminus\{|E|+1,|E|+2\}$ are defined as:
\begin{itemize}
\item $v_i(g_i) = |V|$. 
\item $v_i(g_{|E|+1}) = |V|-1$ and $v_i(g_{|E|+2}) = 0$.
\item $v_i(a_j) = 1$, if the vertex $j$ that corresponds to item $a_j$ is adjacent to the edge represented by agent $i$, else 0.
\end{itemize}
The valuation for agent $|E|+1$, is defined as:
\begin{itemize}
\item $v_{|E|+1}(g_{|E|+1}) = |V|-2\ell$.
\item $v_{|E|+1}(g) = 0$, $\forall g\in G\setminus \{g_{|E|+1}\}$.
\item $v_{|E|+1}(a_j) = 1$, $\forall a_j\in A$.
\end{itemize}
Finally, the valuation for agent $|E|+2$ is defined as:
\begin{itemize}
\item $v_{|E|+2}(g_{|E|+2}) = |V|$ and $v_{|E|+2}(g_{|E|+1}) = 0$.
\item $v_{|E|+2}(g) = 2|V|-\ell$, for all $g\in G\setminus\{g_{|E|+1},g_{|E|+2}\}$.
\item $v_{|E|+2}(a_j) = 1$, $\forall a_j\in A$.
\end{itemize}

\ifshort
To complete the proof, it suffices to show that $(V,E)$ contains an independent set $Q$ of size $\ell$ if and only if allocating the open items corresponding to $Q$ to agent $|E|+1$ and all remaining open items to agent $|E|+2$ is a solution for $\mathcal{I}'$. To see that this holds, we note that if the items corresponding to both endpoints of an edge are allocated to the agent $|E|+1$, then $|E|+1$ would be envied by the corresponding edge agent. Moreover, if the number of items assigned to $|E|+2$ is less than $|V|-\ell$, then the agent $|E|+2$ would envy all of the edge-agents (possibly except for $|E|+1$). Finally, if the number of items assigned to $|E|+1$ is less than $\ell$, then $|E|+1$ would envy $|E|+2$ in view of the previous sentence.
\fi
\iflong
\paragraph{From $\mathcal I$ to $\mathcal I'$.} Assume that instance $\mathcal I$ contains an independent set of size $\ell$. Let $\text{IS}\subset A$ be the subset of open items $a_j\in A$, that correspond to vertices that belong to an independent set of $\mathcal I$. We will show that by allocating bundle $\text{IS}$ to agent $|E|+1$ and bundle $A\setminus \text{IS}$ to agent $|E|+2$, we get an envy-free allocation extension for $\mathcal I'$. Thus, in the complete allocation $\gamma' = \gamma\cup \pi$, it holds that $\gamma'(g_{|E|+1}\cup \text{IS}) = |E|+1$ and $\gamma'(g_{|E|+2}\cup (\text{A}\setminus \text{IS})) = |E|+2$.

We will argue that in the complete allocation $\gamma'$, no agent envies another agent. For agent $|E|+1$ it holds that $v_{|E|+1}(g_{|E|+1}\cup \text{IS}) = v_{|E|+1}(g_{|E|+1})+ v_{|E|+1}(\text{IS}) = |V|-2\ell + \ell = |V|-\ell$, which holds because we assume that valuations are additive, agent $|E|+1$ values any item in $\text{IS}$ equal to 1, and by the hypothesis each item in IS corresponds to some vertex in the independent set of size $\ell$.  

Agent $|E|+1$ does not envy any agent in $N\setminus \{|E|+2\}$, since he is indifferent towards items in $G\setminus \{g_{|E|+1}\}$. Finally, the valuation of $|E|+1$ for the bundle of agent $|E|+2$ is $v_{|E|+1}(g_{|E|+2}\cup (\text{A}\setminus \text{IS})) = v_{|E|+1}(g_{|E|+2})+v_{|E|+1}(A\setminus \text{IS}) = 0 + |V|- \ell = |V|-\ell$. Consequently, $|E|+1$ does not envy agent $|E|+2$ either. 

Agent's $|E|+2$ valuation for his bundle is $v_{|E|+2}(g_{|E|+2}\cup (\text{A} \setminus \text{IS})) = v_{|E|+2}(g_{|E|+2}) + v_{|E|+2}(\text{A} \setminus \text{IS}) = |V|+|V|-\ell = 2|V|-\ell$, since the valuations are additive and there exist $|V|-\ell$ items in the bundle $\text{A}\setminus \text{IS}$, each valued 1. There exist no envy towards agent $|E|+1$ since $v_{|E|+2}(g_{|E|+1}\cup \text{IS}) = v_{|E|+2}(g_{|E|+1}) + v_{|E|+2}(\text{IS}) = \ell < 2|V|-\ell$. Finally agent $|E|+2$ values any item in $G\setminus \{g_{|E|+1},g_{|E|+2}\}$ by $2|V|-\ell$, thus he does not envy any agent in $N\setminus \{|E|+1,|E|+2\}$.  

None of the agents $i\in N\setminus \{|E|+1,|E|+2\}$, receive an open item from $\text{A}$, while they value their bundle by $v_i(g_i) = |V|$. Notice that no two items in $\text{IS}$ can correspond to adjacent vertices of $\mathcal I$. This holds from the way IS is defined and the assumption that $\mathcal I$ contains an independent set of size $\ell$. Consequently the valuation of any agent $i\in N\setminus \{|E|+1,|E|+2\} $ for \text{IS} is at most 1. Thus their valuation for the bundle of agent $|E|+1$ is $v_i(g_{|E|+1}\cup \text{IS}) = v_i(g_{|E|+1}) + v_i(\text{IS}) \leq  |V|$, meaning that no envy exists towards agent $|E|+1$. 

Finally no agent $i\in N\setminus \{|E|+1,|E|+2\}$ is interested in item $g_{|E|+2}$, while $i's$ value for the bundle of agent $|E|+2$ in $\gamma'$ is $v_i(g_{|E|+2}\cup (\text{A}\setminus \text{IS})) \leq 2 \leq |V|$, meaning that no envy towards agent $|E|+2$ exists.

The above arguments show that by allocating the bundle IS to agent $|E|+1$ and the bundle A$\setminus$IS to agent $|E|+2$, we get an envy-free allocation extension $\gamma'$ of the partial allocation $\gamma$ for the instance $\mathcal I'$.

\paragraph{From $\mathcal I'$ to $\mathcal I$.} Assume that there exists an envy-free allocation extension $\gamma'$ of $\gamma$ for $\mathcal I'$. First, we will argue that in any such extension, the open items in $\text{A}$ are allocated only to the agents $|E|+1$ and $|E|+2$. In the partial allocation $\gamma$, it holds that  $v_{|E|+2}(g_{|E|+2}) = |V|$, but agent $|E|+2$ envies all agents $i\in N\setminus \{|E|+1,|E|+2\}$, since $v_{|E|+2}(g_i) = 2|V|-\ell$. Consequently, since agent $|E|+2$ values any open item by 1, in allocation $\pi$ he must receive at least $|V|-\ell$ open items. In turn agent $|E|+1$ has a valuation of $v_{|E|+1}(g_{|E|+1}) = |V|-2\ell$ and the only way not to envy agent $|E|+2$ is to get exactly $\ell$ open items. Otherwise, since $|E|+1$ values any open item by 1 and agent $|E|+2$ must take at least $|V|-\ell$ open items if agent $|E|+1$ is allocated less than $\ell$ open items, then the valuation he receives will be lower than $|V|-\ell$ which is the lower bound of his valuation for the bundle of agent $|E|+2$ in any envy-free extension. Consequently in $\pi$, agents $|E|+1$ and $|E|+2$ must receive exactly $\ell$ and $|V|-\ell$ respectively open items from set A. 

Let $A_{|E|+1}$ be the subset of $\ell$ items of A that are allocated to agent $|E|+1$ and $A_{|E|+2}$ be the subset of $|V|-\ell$ items of A that are allocated to agent $|E|+2$ in allocation $\pi$. Notice that no agent $i\in N\setminus \{|E|+1,|E|+2\}$ envies agent $|E|+2$ since $v_i(g_{|E|+2}\cup A_{|E|+2}) \leq 2\leq |V| = v_i(g_i)$. 
Moreover, since no agent can be envious in $\pi$, that must mean that for each $i\in N\setminus\{|E|+1,|E|+2\}$, $v_i(g_{|E|+1}\cup A_{|E|+1}) \leq v_i(g_i)$ which consequently means that $v_i(A_{|E|+1})\leq 1$. This can hold only if no two items in $A_{|E|+1}$ correspond to vertices that are adjacent to the same edge in $\mathcal I$. Consequently, since $\gamma'$ is an envy-free allocation extension of the partial allocation $\gamma$ for $\mathcal I'$, it must be the case that the items in the subset $A_{|E|+1}$ correspond to vertices of instance $\mathcal I$ that form an independent set of size $\ell$.
\fi
\end{proof}

We remark that Theorem~\ref{thm:kappahardtwo} is tight in the sense that both problems are also in \NP. Indeed, in both cases one can verify whether a provided complete allocation is envy-free in polynomial time.
Having settled the intractability of \REFAE and \FEFAE w.r.t.\ $p$ alone, we now ask whether one can achieve tractability at least in settings with a small number of agent or item types. 

We begin by considering the former parameterization, which was the one that yielded fixed-parameter tractability for the setting investigated in the previous section. Here, we can immediately exclude fixed-parameter tractability for all of the problem variants considered in this section. This holds due to the aforementioned fact that finding an envy-free allocation of a set of open items to agents is known to be strongly \W[1]-hard and weakly para\NP-hard when parameterized by the number of agents~\cite{BliemBN16}. Below, we at least show that both problems admit \XP algorithms in the unary-valuation case, complementing the former lower bound.


\kappantxp*


\ifshort
\begin{proof}[Proof Sketch]
For \FEFAE, we start by branching to determine which of the at most $p$ agents will be the recipients, requiring a branching factor of at most $n^p$. For \REFAE, we skip this step.

At this point, we now perform a \iflong fairly standard \fi dynamic programming subroutine for an input instance $\mathcal{I}$ and then argue that this subroutine, in fact, solves our problem. For the subroutine, we construct a table which stores the following information: for each recipient $i\in [p]$ and each agent type $Z$, we store the valuation of the bundle assigned to $i$ from the perspective of agents of type $Z$. We note that each table entry consists of $p\cdot \agenttypes$ unary-encoded values. We assume that the items are processed in an arbitrary but fixed order, and for each item we update the table by exhaustively assigning it to each of the possible recipients and updating the table accordingly. 

At the end of this subroutine, we loop through each of the at most $|\mathcal{I}|^{\bigoh({\agenttypes\cdot p})}$ entries in the table, and notice that each such entry provides us with complete information about how each agent values each bundle in $\mathcal{I}$ in all possible assignments corresponding to that table entry; hence, it suffices to check, in polynomial time, whether at least one such table entry results in an envy-free assignment in $\mathcal{I}$. 
\end{proof}
\fi

\iflong
\begin{proof}

Firstly, recall that for \REFAE we know exactly which of the agents are the $p$ recipients. However, for \FEFAE, we first start by branching to determine which of the at most $p$ agents will be the recipients, requiring a branching factor of at most $n^p$. 

Having the set of recipients as part of the input, we denote the set of recipients as $[p]$.
Let us fix some arbitrary ordering $1, \dots, m$ in which we will assign items in $A$ (the set of open items).
We now construct a table $T_t$ for $t\in [A]$ to store how much agents envy one another under different configurations. 
A configuration $x$ at step $t$ is a possible way of giving out the first $t$ items to the recipients. Each entry $t$ has a vector $c^x_t$, the configuration $x$ of $t$ many items, for each configuration $x$. 

For each configuration $x$, we store for each recipient $i\in [p]$ and each agent type $Z$ the valuation of the bundle assigned to $i$ from the perspective of agents of type $Z$, denoted as $v^Z_i$. Observe that this includes the valuation agent $i$ has for their bundle. 
It is sufficient to compute valuations from each agent type $Z$ against each of the $p$ recipient agents because all agents in $Z$ see the bundles of the recipient agents the same, and the bundles of the non-recipient agents won't change.
We note that each table entry consists of $p\cdot \agenttypes$ unary-encoded values.

We have just one vector for the starting table entry $T_0$, as none of $A$ has been assigned yet. Hence, this is exactly the valuations $V^Z_i$ for in the partial allocation $\gamma$ (which is EF). 
At $T_1$ we now start creating configurations of items and then computing the valuations from each agent type $Z$ towards each recipient agent $i \in [p]$. Note also that in $T_1$ we will have $p$ many vectors; we are assigning the first item to each of the possible recipients. Additionally observe that we are not storing the configurations (how we allocate the items), just the valuations induced by the configuration. Hence, we can also discard duplicate entries.

Going forward, we compute each entry $T_{t+1}$ by giving out the next item, the $t+1$th item, to each of the recipients and updating the table accordingly. Crucially, each entry will have $n_t \cdot p$ many values (the number of valuations we compute), and so---since the valuations are encoded in unary---after assigning all open items, the table will be storing at most $|\mathcal{I}|^{\agenttypes \cdot p}$ many entries. This yields a total running time of $|\mathcal{I}|^{\bigoh({\agenttypes\cdot p})}\cdot |\mathcal{I}|$.\

At the end of this subroutine, we loop through each of the at most $|\mathcal{I}|^{\bigoh({\agenttypes\cdot p})}$ entries in the table, and notice that each such entry provides us with complete information about how each agent values each bundle in $\mathcal{I}$ in all possible assignments corresponding to that table entry. Hence, it suffices to check, in polynomial time, whether at least one such table entry results in an envy-free assignment in $\mathcal{I}$. 
If there does not exist such an entry in the final table, we can conclude there is no solution because this means among all configurations of the open items no extension exists which is EF.
\end{proof}
\fi


Surprisingly, when dealing with a small number of recipients, we prove that it is the number $\itemtypes$ of item types which yields better tractability results for the considered problems---a situation that is entirely opposite to that of the previous section. In particular, when parameterizing by $p+\agenttypes$ we obtain a fixed-parameter algorithm for \REFAE and an \XP algorithm for \FEFAE. We remark that the latter result is the best one could have at this point hoped for due to Theorem~\ref{thm:kmtwhard} ruling out fixed-parameter algorithms even in a strictly more restrictive setting.

\kappamtfpt*
    

\ifshort
\begin{proof}[Proof Sketch]
For \FEFAE, we start again by branching to determine which of the at most $p$ agents will be the recipients, requiring a branching factor of at most $n^p$. At this point, we can solve both problems via an encoding into an 
\textsc{ILP} 
using a number of variables that depend purely on the parameters. In particular, for each of the at most $p$ recipients and each of the $\itemtypes$ item types, we construct a dedicated variable which will determine precisely how many items of this type will be assigned to the given recipient. The valuation of each agent can then be easily captured as a linear expression of the variables, and the requirement that the final allocation is envy-free can be captured via linear constraints. Since the number of variables is upper-bounded by $p\cdot \itemtypes$, we can solve the resulting instance in time $(p\cdot \itemtypes)^{\bigoh(p\cdot \itemtypes)}\cdot |\mathcal{I}|^{\bigoh(1)}$~\cite{Lenstra83,Dadush12}.
\end{proof}
\fi

\iflong
\begin{proof}
For \FEFAE, we start by branching on which agents will receive items; this requires at most $n^p$ branches.
Let us fix some set of recipient agents $[p]$.
We skip this first step for \REFAE. The remainder of the proof is the same for both \REFAE and \FEFAE.

We now define an \textsc{Integer Linear Programming} (ILP) instance 
such that the ILP instance is satisfiable if and only if there exists an envy-free complete allocation consistent with $\gamma$. 

We define a variable $x^t_i$ for each agent $i \in [p]$ and each item type $t \in T$ which denotes the number of items of type $t$ that agent $i$ receives. We first make sure that any satisfying assignment for ILP assigns the correct number of items of each type by including for each type $t$ the equality (encoded as two inequalities) 
\begin{equation}
    \sum_{i\in [p]}x^t_i = |t|,\label{eq:numberItems}
\end{equation} 
where $|t|$ denotes the number of items of type $|t|$.


In addition, we require that no pair of agents that receive items envy each other after the extension. Formally, for all pairs of agents $i \in [p]$ and $j \in [p]$ we add the following inequality to the ILP instance
\begin{equation}
v_i(\pi_i) + \sum_{t \in T} x^t_i \cdot v_i(t) \geq v_i(\pi_j) +\sum_{t \in \setitemtypes} v_i(t) \cdot x^t_j,\label{eq:envyRecepients}
\end{equation}
This inequality says that agent $i$ is not envious of $j$ after the extension; i.e., $i$ getting $x^t_i$ many of type $t$ and $j$ getting $x^t_j$ many does not create envy from $i$ to $j$. 

We also create an inequality which handles agents not receiving any items (agents not in $p$) to ensure they do not become envious. In particular, for $j \notin [p]$ we construct the following inequality:
\begin{equation}
v_i(\pi_j) \geq 
v_i(\pi_i) + \sum_{t \in T} x^t_i \cdot v_i(t) .\label{eq:envyNonRecepients}
\end{equation}

Observe, the number of variables is upper-bounded by $p\cdot \itemtypes$. Hence, we can solve the resulting instance in time $(p\cdot \itemtypes)^{\bigoh(p\cdot \itemtypes)}\cdot |\mathcal{I}|^{\bigoh(1)}$~\cite{Lenstra83,Dadush12}. Thus, this is FPT-time for \REFAE and XP-time for \FEFAE as we have $\bigoh(n^p)$ different ILP instances to check.

To complete the proof, it remains to argue that we have an envy-free solution if and only if we have a solution to the constructed ILP. 

Assume we have a solution for the ILP, this solution describes to us what agents will receive which bundles. More precisely, for each agent $i\in [p]$ and item type $t\in T$ we assign arbitrary $x_i^t$ many items of type $t$ to agent $i$. First note that  Equality~\eqref{eq:numberItems} ensures that such an assignment is possible and all items have been assigned to some agent. 
Assume that this solution is not an envy-free extension to $\gamma$. This means that there must exist a pair of agents $i$ and $j$ such that $i$ envies $j$. If $i$ is a recipient agent and $j$ is also a recipient, this would mean that Inequality~\eqref{eq:envyRecepients} 
does not hold for the pair $i,j$. On the other hand, if $j$ was not a recipient, it is not possible $i$ envies $j$ because $\gamma$, the partial allocation, is envy-free, all items have non-negative values for all agents, and $j$ has not received anything in the extension.
If agent $i$ was not a recipient and $j$ is a recipient, this means Inequality~\eqref{eq:envyNonRecepients} would not hold for this pair. Thus, the solution for the ILP must be an envy-free solution.

Now, given an solution to \extensionproblem we will argue that this is also gives us a solution to the ILP. To obtain the solution for the ILP, for every agent $i\in [p]$ and every item type $t\in T$, we let $x_i^t$ be the number of items of type $t$ the agent $i$ received. Equality~\eqref{eq:numberItems} holds for each item type $t\in T$, because all items of the type have been assigned. Inequalities~\eqref{eq:envyRecepients}~and~\eqref{eq:envyNonRecepients} directly describe that the agent $i$ does not envy the agent $j$ for the particular pair of agents $i,j$, and so all of these inequalities have to hold as well.
This concludes the proof.
\end{proof}
\fi




\section{Extending Allocations Beyond Envy-Freeness}
While it seems that the algorithmic upper and lower bounds \iflong presented in the main contribution of our article \fi could be translated also to the analogous problems of extending EF1 and EFX allocations, in this section we turn towards a different, highly studied aspect of such allocations---specifically, the question of whether such allocations are guaranteed to exist~\cite{budish2011combinatorial,CaragiannisKMPS19}. We show that in the extension setting, this question can be completely resolved based on the level of fairness we assume in the partial allocation.

First of all, it is obvious that if no fairness guarantees are provided for the partial allocation, then one cannot hope to guarantee an extension to an EF1 or EFX allocation (since, e.g., there might not be any open items left at all). On the other hand, if we assume the partial allocation to be envy-free, we obtain the following.

\iflong
\begin{proposition}
\fi
\ifshort
\begin{proposition}
\fi
Every envy-free partial allocation \iflong of items to agents \fi can be extended to an \textup{EF1} allocation. At the same time, there exists an envy-free partial allocation of items to $2$ agents which cannot be extended to an \textup{EFX} allocation.
\end{proposition}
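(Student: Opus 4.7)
For the first statement my plan is to adapt the classical envy-cycle elimination procedure to the extension setting. Starting from $\gamma$, I would process open items one at a time, maintaining the invariant that the current partial allocation (including $\gamma$) is EF1. At each step I assign the next open item to an agent $s$ that is not envied by anyone, i.e.\ a source in the current envy graph. This preserves EF1 because the only new envy that may appear is directed towards $s$, and removing the freshly assigned item from $s$'s bundle nullifies any such envy at the pair $(i,s)$, while all other pairs are unaffected. The procedure begins favourably: since $\gamma$ is envy-free, the initial envy graph is empty and every agent is a source. The main technical obstacle is arguing that a source continues to exist throughout: when an item is assigned to source $s$, every agent $v \neq s$ that was previously a source remains one, since only edges incident to $s$ can change and $v$'s in-degree can only drop (as outgoing edges from $s$ may disappear). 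The delicate case is when $s$ is the unique source; here a max-Nash-welfare-style perturbation argument, in which one moves an open item in an envied bundle to the envious agent, can be used to ensure the process still makes progress.

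For the second statement I would exhibit the following small counterexample with two agents $a,b$ having the same additive valuation $v$. Let $\gamma_a = \{p\}$ and $\gamma_b = \{q\}$ with $v(p) = v(q) = 1$, and let the open items be $o_1,o_2,o_3$ with $v(o_i) = 2$ for each $i$. The partial allocation is envy-free since both agents value each pre-allocated bundle at $1$. To rule out an EFX extension, I argue by pigeonhole: any allocation $\pi$ of the three open items gives at least two of them to some agent, say $a$. Then $v(\gamma_a \cup \pi_a) \geq 1 + 2\cdot 2 = 5$ while $v(\gamma_b \cup \pi_b) \leq 1 + 1\cdot 2 = 3$. Removing the pre-allocated item $p$ (value $1$) from $a$'s bundle still leaves value at least $4 > 3$, so $b$ envies $a$ even after the removal of $p$, violating EFX at the pair $(b,a)$ via the item $p$.
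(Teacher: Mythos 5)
Your counterexample for the second claim is correct and is essentially the paper's own construction (two agents with identical valuations, pre-allocated bundles of equal value, and open items valuable enough that removing a pre-allocated item from the enriched bundle does not kill the envy); the paper gets away with a single open item of value $2$, but your three-item pigeonhole variant works just as well.

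The first part, however, has a genuine gap, and it sits exactly where you flag the ``delicate case.'' Running envy-cycle elimination on the \emph{combined} bundles $\gamma_i\cup\pi_i$ really can run out of sources: take $\gamma=\emptyset$, two agents, and open items $x,y,z$ with $v_1(x)=0$, $v_2(x)=10$, $v_1(y)=10$, $v_2(y)=0$; assigning $x$ to agent $1$ and then $y$ to the unique remaining source (agent $2$) produces a two-cycle in the envy graph before $z$ is placed. The classical repair---rotating whole bundles along the cycle---is unavailable in the extension setting because the $\gamma$-parts are pinned to their agents, and rotating only the open parts need not improve every agent on the cycle: from $v_i(\gamma_i)+v_i(\pi_i)<v_i(\gamma_j)+v_i(\pi_j)$ one cannot conclude $v_i(\pi_j)\ge v_i(\pi_i)$. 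Your proposed ``max-Nash-welfare-style perturbation'' is not developed enough to close this; it does not specify which item moves, why EF1 is preserved, or why the process terminates. The paper sidesteps the whole issue with a one-line decomposition: compute an EF1 allocation $\pi$ of the open items \emph{from scratch} (ignoring $\gamma$ entirely, using envy-cycle elimination or round robin on the open items alone), and then add the two inequalities $v_i(\gamma_i)\ge v_i(\gamma_j)$ and $v_i(\pi_i)\ge v_i(\pi_j)-\max_{a\in\pi_j}v_i(a)$; additivity immediately yields EF1 for $\gamma\cup\pi$. I would recommend replacing your interleaved procedure with this decomposition.
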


\ifshort
\begin{proof}[Proof Sketch]
To extend an envy-free allocation to EF1, we show that it is possible to compute an allocation $\pi$ of open items such that the final allocation consisting of $\pi$ and the partial allocation $\gamma$ is EF1. This is done by executing a standard algorithm for computing EF1 allocations (Envy Cycle Elimination~\cite{lipton2004approximately}) on top of the partial allocation $\gamma$.
For the second part of the claim, consider two agents with identical valuations such that $\gamma$ assigns each an item that each of them values $1$. There are three open items to be allocated that the agents value 3, 4, and 9 respectively. 
\end{proof}
\fi
\iflong
\begin{proof}
The first part of the claim follows almost immediately from the fact that the utilities are additive. 
Let $\gamma$ be an envy-free allocation.  This means that $v_i(\gamma_i) \geq v_i(\gamma_j)$ for every pair of agents $i$ and $j$.
We compute a partial allocation $\pi$ that is EF1 employing a standard algorithm, like Envy Cycle Elimination~\cite{lipton2004approximately}, or Round Robin. 
Then, since $\pi$ is EF1 it holds that $v_i(\pi_i) \geq v_i(\pi_j) - \max_{a \in \pi_j} v_i(a)$, for every pair of agents $i$ and $j$. 
Thus, if we combine all the above we get that $v_i(\gamma_i \cup \pi_i) = v_i(\gamma_i) + v_i(\pi_i) \geq v_i(\gamma_j)+v_i(\pi_j) - \max_{a \in \pi_j} v_i(a)$, for every pair of agents $i$ and $j$. Thus, the extended allocation $\gamma \cup \pi$ is EF1.

For the second part of the claim consider two agents with identical valuations. Under $\gamma$, each agent gets an item that he values 1 and there is one open item that each agent values 2. No matter which agent agent receives the open item, the other agent will still envy him after removing the item of value 1. Hence, both possible extended allocations do not satisfy the conditions of EFX.
%
\end{proof}
\fi

Interestingly, the latter non-existence result contrasts with the known fact that EFX allocations from scratch always exist for $2$ agents~\cite{plaut2020almost}. 

Moreover, rather than starting with an envy-free partial allocation, we can also consider some other ``reasonable'' partial allocations. For example, if we start with a partial allocation which is EF1 and the envy-free graph is acyclic~\cite{lipton2004approximately}, there is a polynomial-time algorithm to find an EF1 complete allocation~\cite{igarashi2024fair}. On another note, it is, in fact, \NPc to find a complete allocation which is EF1 and PO, by starting from an EF1 partial allocation even when agents have identical valuations~\cite{igarashi2024fair}. 

The final remaining question is whether a partial EFX allocation extends to a complete EF1 allocation. We resolve this in the negative below.

\iflong
\begin{proposition}
\fi
\ifshort
\begin{proposition}
\fi
There exists an \textup{EFX} partial allocation \iflong of items \fi to $2$ agents which cannot be extended to an \textup{EF1} allocation.
\end{proposition}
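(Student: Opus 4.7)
The plan is to exhibit an explicit counterexample with $2$ agents and $3$ items. The driving intuition is as follows: EFX of a partial allocation becomes trivially easy to satisfy whenever each agent's $\gamma$-bundle is a singleton, because the only item available to remove from the other agent's bundle is that singleton itself, which collapses the EFX inequality to $v_i(\gamma_i) \geq 0$. By contrast, EF1 of the completed allocation requires a single-item removal that brings an envied bundle down to a value at most the envier's own bundle. The calibration I aim for is that, no matter which open item is given to whom, the resulting envied bundle contains two different items that are \emph{each individually} worth strictly more than the envier's entire bundle, so that no single-item removal can recover EF1.

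The concrete construction I plan to use has items $a,b,c$ with $v_1(a)=v_1(c)=5$, $v_1(b)=1$, and symmetrically $v_2(b)=v_2(c)=5$, $v_2(a)=1$. The partial allocation will be $\gamma_1=\{b\}$ and $\gamma_2=\{a\}$, leaving $c$ as the unique open item. The first step is to verify that $\gamma$ is EFX: since each $\gamma_i$ is a singleton of value $1$ to agent $i$, and the only item available to remove from the other agent's bundle is its sole element, the EFX inequality reduces to $1 \geq 0$ in both directions.

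The second step is to inspect the only two possible extensions. Giving $c$ to agent $1$ produces the bundles $\{b,c\}$ and $\{a\}$, which agent $2$ values at $10$ and $1$ respectively; since each of $b$ and $c$ is worth exactly $5$ to agent $2$, every single-item removal from $\{b,c\}$ still leaves a bundle of value $5>1$, so EF1 fails. By the symmetry of the valuations under swapping agents and swapping $a\leftrightarrow b$, giving $c$ to agent $2$ instead fails EF1 for agent $1$ by the identical argument. The only real subtlety is calibrating the ratio between ``large'' and ``small'' item values: it must be chosen so that each single large item alone already strictly exceeds the envier's own singleton bundle, which is why the gap $5$ versus $1$ suffices. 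I do not foresee further obstacles, as the entire verification consists of a direct check on two extensions.
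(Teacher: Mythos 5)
Your proposal is correct and takes essentially the same approach as the paper: a two-agent, three-item instance where each agent is pre-assigned a singleton (making EFX trivially hold), and the single open item is calibrated so that either extension leaves an envied two-item bundle in which every single-item removal still exceeds the envier's own value. The paper's numbers differ slightly (own-value $0$ singletons and an open item of value $1$ rather than $5$), but the construction and verification are the same in substance.
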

\ifshort
\begin{proof}[Proof Sketch]
Consider the following instance with two agents and three items $x,y,z$. 
The valuations of the agents are as follows:
$v_1(x)=10, v_1(y)=0, v_1(z)=1$; 
$v_2(x)=0, v_2(y)=10, v_2(z)=1$.
Consider now the partial allocation that gives $x$ to agent 2 and $y$ to agent 1. Trivially, this is an EFX allocation since envy can be eliminated by removing the only item the other agent received. 
Now, no matter which agent receives item $z$, the constructed solution will not be EF1.
\end{proof}
\fi

\iflong
\begin{proof}
Consider the following instance with two agents and three items $x,y,z$. 
The valuations of the agents are as follows:
$v_1(x)=10, v_1(y)=0, v_1(z)=1$; 
$v_2(x)=0, v_2(y)=10, v_2(z)=1$.
Consider now the partial allocation that gives $x$ to agent 2 and $y$ to agent 1. Trivially, this is an EFX allocation since envy can be eliminated by removing the only item the other agent receives. 
Now, no matter which agent receives item $z$ the constructed solution will not be EF1. Indeed, assume that agent 1 gets $z$. Observe that the value of agent 2 for $\{y,z \}$ is $10+1$, while for their bundle is 0. Hence after the removal of any item from the bundle of agent 1, agent 2 will still have a positive value for it and thus the envy is not eliminated.
\end{proof}
\fi

\section{Concluding Remarks}

Our paper initiates the study of fairly extending partial allocations of indivisible items and the frontiers of tractability for this problem. The presented results showcase that the complexity of this task varies as it strongly depends on the chosen parameters---with the number of agent types and item types each leading to tractability in different settings. This naturally gives rise to the question of what other parameterizations can yield fixed-parameter algorithms for this natural but previously overlooked problem. For instance, can one obtain such algorithms when simultaneously parameterizing by the number \itemtypes\ of item types {\em and} the number \agenttypes\ of agent types? 
If this combination of parameters is not sufficient for tractability, can the inclusion of the number of recipients, $p$, in addition to \agenttypes and \itemtypes lead to tractability?

A different direction that has proven fruitful recently, is to consider the problem when agents form a social network~\cite{BredereckKN22,EibenGHO23}. In this setting, each agent compares their bundle against the bundles of their ``friends''. What are the graph structures on the social network that make the problem tractable?

\section*{Acknowledgments}
Argyrios Deligkas and Stavros D. Ioannidis acknowledge the support of the EPSRC grant EP/X039862/1.
Robert Ganian acknowledges support from the Austrian Science Fund (FWF, projects 10.55776/Y1329 and 10.55776/COE12) and the Vienna Science Foundation (WWTF, project 10.47379/ICT22029).

\bibliographystyle{alpha}
\bibliography{references}

\end{document}